\documentclass[11pt,english]{article}
\usepackage[latin9]{inputenc}
\usepackage{float}
\usepackage{amsmath}
\usepackage{amsthm}
\usepackage{amssymb}
\usepackage[numbers]{natbib}

\makeatletter

\providecommand{\tabularnewline}{\\}

  \theoremstyle{plain}
  \newtheorem{thm}{\protect\theoremname}[section]
   \theoremstyle{plain}
  
  \newtheorem{question}{\protect\questionname}
  \theoremstyle{definition}
  \newtheorem{example}{\protect\examplename}[section]
  \theoremstyle{remark}
  \newtheorem{rem}{\protect\remarkname}[section]
  \theoremstyle{plain}
  \newtheorem{lem}{\protect\lemmaname}[section]
  \theoremstyle{plain}
  \newtheorem{cor}{\protect\corollaryname}[section]
  \theoremstyle{plain}
  
  \theoremstyle{definition}
  \newtheorem{definition}{\protect\definitionname}[section]

\usepackage[all,2cell]{xy}  \UseAllTwocells \SilentMatrices
\usepackage{hyperref}
\usepackage{amsmath}
\usepackage{slashed}
\usepackage{graphicx}

\usepackage[misc,geometry]{ifsym} 

\hypersetup{colorlinks=true, linkcolor=blue, citecolor=blue, filecolor=blue, pagecolor=blue, urlcolor=red, pdfauthor={}, pdftitle={}}

\setlength{\textwidth}{6.5in} \setlength{\oddsidemargin}{0in}
\setlength{\evensidemargin}{0in}
\setlength{\marginparwidth}{0.5in} \setlength{\textheight}{8.5in}
\setlength{\topmargin}{0in} \setlength{\footskip}{0.3in}

\makeatother

\usepackage{babel}
  \providecommand{\definitionname}{Definition}
  \providecommand{\examplename}{Example}
  \providecommand{\lemmaname}{Lemma}
  \providecommand{\propositionname}{Proposition}
  \providecommand{\questionname}{Question}
  \providecommand{\remarkname}{Remark}
\providecommand{\corollaryname}{Corollary}
\providecommand{\theoremname}{Theorem}

\begin{document}

\title{An Axiomatization Proposal and a Global Existence Theorem for Strong Emergence Between Parameterized Lagrangian Field Theories}

\author{Yuri Ximenes Martins\footnote{yurixm@ufmg.br (corresponding author)}\,  and  Rodney Josu\'e Biezuner\footnote{rodneyjb@ufmg.br} }
\maketitle
\noindent \begin{center}
\textit{Departamento de Matem\'atica, ICEx, Universidade Federal de Minas Gerais,}  \\  \textit{Av. Ant\^onio Carlos 6627, Pampulha, CP 702, CEP 31270-901, Belo Horizonte, MG, Brazil}
\par\end{center}
\begin{abstract}
In this paper we propose an axiomatization for the notion of strong emergence
phenomenon between field theories depending on additional parameters, which we call parameterized field theories. We present
sufficient conditions ensuring the existence of such phenomena between
two given Lagrangian theories. More precisely,
we prove that a Lagrangian field theory depending linearly on an additional parameter emerges from every multivariate polynomial theory evaluated at differential operators which have well-defined Green functions (or, more generally, that has a right-inverse in some extended sense). As a motivating example, we show that the phenomenon of gravity emerging from noncommutativy, in the context of a real or complex scalar field theory, can be recovered from our emergence theorem. We also show that, in the same context, we could also expect the reciprocal, i.e., that noncommutativity could emerge from gravity. Some other particular cases are analyzed.
\end{abstract}

\section{Introduction \label{sec_introduction}}

$\quad\;\,$The term \emph{emergence phenomenon} has been used for
years in many different contexts. In each of them, Emergence Theory
is the theory which studies those kinds of phenomena. E.g, we have
versions of it in Philosophy, Art, Chemistry and Biology \citep{0,1,2}.
The term is also often used in Physics, with different meanings
(for a review on the subject, see \citep{3,4}. For an axiomatization
approach, see \citep{5}). This reveals that the concept of emergence
phenomenon is very general and therefore difficult to formalize. Nevertheless,
we have a clue of what it really is: when looking at all those instances of the phenomenon
we see that each of them is about describing a system in terms of other system, possibly in different \emph{scales}. Thus, an emergence
phenomenon is about a relation between two different systems, the
\emph{emergence relation}, and a system emerges from another when
it (or at least part of it) can be recovered in terms of the other
system, which is presumably more fundamental, at least in some scale.
The different emergence phenomena in Biology, Philosophy, Physics,
and so on, are obtained by fixing in the above abstract definition
a meaning for system, scale, etc.

Notice that, in this approach, in order to talk about emergence we
need to assume that to each system of interest we have assigned a
\emph{scale}. In Mathematics, scales are better known as \emph{parameters}.
So, emergence phenomena occur between \emph{parameterized
systems}. This kind of assumption (that in order to fix a system we
have to specify the scale in which we are considering it) is at the
heart of the notion of effective field theory, where the scale (or
parameter) is governed by Renormalization Group flows \citep{6,7,8,9}.
Notice, in turn, that if a system emerges from another, then the
second one should be more fundamental, at least in the scale (or parameter)
in which the emergence phenomenon is observed. This also puts Emergence
Theory in the framework of searching for the fundamental theory of
Physics (e.g Quantum Gravity), whose systems should be the minimal
systems relative to the emergence relation \citep{3,4}. The main
problem in this setting is then the existence problem for the minimum.
A very related question is the general existence problem: \emph{given
two systems, is there some emergence relation between them?}

One can work on the existence problem at different levels of depth.
Indeed, since the systems is question are parameterized one can ask
if there exists a correspondence between them in \emph{some scales}
or in \emph{all scales}. Obviously, requiring a complete correspondence
between them is much stronger than requiring a partial one. On
the other hand, in order to attack the existence problem we also have
to specify which kind of emergence relation we are looking for. Again,
is it a full correspondence, in the sense that the emergent theory
can be fully recovered from the fundamental one, or is it only a 
partial correspondence, through which only certain aspects can be
recovered? Thus, we can say that we have the following four versions
of the existence problem for emergence phenomena:

\begin{table}[H]
\begin{centering}
\begin{tabular}{|c|c|c|c|c|}
\cline{2-5} 
\multicolumn{1}{c|}{} & \emph{weak} & \emph{weak-scale} & \emph{weak-relation} & \emph{strong}\tabularnewline
\hline 
\emph{relation} & partial & full & partial & full\tabularnewline
\hline 
\emph{scales} & some & some & all & all\tabularnewline
\hline 
\end{tabular}
\par\end{centering}
\caption{Types of Emergence}
\end{table}

In Physics one usually works on finding weak emergence phenomena.
Indeed, one typically shows that certain properties of a system can
be described by some other system at some limit, corresponding to
a certain regime of the parameter space. These emergence phenomena
are strongly related with other kind of relation: the \emph{physical
duality}, where two different systems reveal the same physical properties.
One typically builds emergence from duality. For example, AdS/CFT
duality plays an important role in describing spacetime geometry (curvature)
from mechanic statistical information (entanglement entropy) of dual
strongly coupled systems \citep{10,11,12,13,14,15}. 

There are also some interesting examples of weak-scale emergence relations,
following again from some duality. These typically occur when the
action functional of two Lagrangian field theories are equal at some
limit. The basic example is gravity emerging from noncommutativity
following from the duality between commutative and noncommutative
gauge theories established by the Seiberg-Witten maps \citep{16}.
Quickly, the idea was to consider a gauge theory $S[A]$ and modify
it into two different ways: 
\begin{enumerate}
\item by considering $S[A]$ coupled to some background field $\chi$, i.e,
$S_{\chi}[A;\chi]$;
\item by using the Seiberg-Witten map to get its noncommutative analogue
$S_{\theta}[\hat{A};\theta]$.
\end{enumerate}
$\quad\;\,$Both new theories can be regarded as parameterized theories:
the parameter (or scale) of the first one is the background field
$\chi$, while that of the second one is the noncommutative parameter
$\theta^{\mu\nu}$. By construction, the noncommutative theory $S_{\theta}[A;\theta]$
can be expanded in a power series on the noncommutative parameter,
and we can also expand the other theory $S_{\chi}[A;\chi]$ on the
background field, i.e, one can write 
\[
S_{\chi}[A;\chi]=\sum_{i=0}^{\infty}S_{i}[A;\chi^{i}]=\lim_{n\rightarrow\infty}S_{(n)}[A;\chi]\quad\text{and}\quad S_{\theta}[\hat{A};\theta]=\sum_{i=0}^{\infty}S_{i}[A;\theta^{i}]=\lim_{n\rightarrow\infty}S_{(n)}[A;\theta],
\]
where $S_{(n)}[A;\chi]=\sum_{i=0}^{n}S_{i}[A;\chi^{i}]$ and $S_{(n)}[A;\theta]=\sum_{i=0}^{n}S_{i}[A;\theta^{i}]$
are partial sums. One then tries to find solutions for the following
question:
\begin{question}
\label{question_int}Given a gauge theory $S[A]$, is there a background
version $S_{\chi}[A;\chi]$ of it and a number $n$ such that for
every given value $\theta^{\mu\nu}$ of the noncommutative parameter
there exists a value of the background field $\chi(\theta)$, possibly
depending on $\theta^{\mu\nu}$, such that for every gauge field $A$
we have $S_{(n)}[A;\chi(\theta)]=S_{(n)}[A;\theta]$?
\end{question}
Notice that if rephrased in terms of parameterized theories, the question
above is precisely about the existence of a weak-scale emergence between
$S_{\chi}$ and $S_{\theta}$, at least up to order $n$. This can
also be interpreted by saying that, in the context of the gauge theory
$S[A]$, the background field $\chi$ emerges in some regime from
the noncommutativity of the spacetime coordinates. Since the noncommutative
parameter $\theta^{\mu\nu}$ depends on two spacetime indexes, it
is suggestive to consider background fields of the same type, i.e, 
$\chi^{\mu\nu}$. In this case, there is a natural choice: metric
tensors $g^{\mu\nu}$. Thus, in this setup, the previous question
is about proving that in the given gauge context, gravity emerges
from noncommutativity at least up to a perturbation of order $n$.
This has been proved to be true for many classes of gauge theories and
for many values of $n$ \citep{17,18,19,20,21}. On the other hand,
this naturally leads to other two questions:
\begin{enumerate}
\item Can we find some emergence relation between gravity and noncommutativity
in the nonperturbative setting? In other words, can we extend the
weak-scale emergence relation above to a strong one? 
\item Is it possible to generalize the construction of the cited works to
other kinds of background fields? In other words, is it possible to
use the same idea in order to show that different fields emerge from
spacetime noncommutativity? Or, more generally, is it possible to build a version of it for some general class of field theories?
\end{enumerate}
$\quad\;\,$The first of these questions is about finding a strong
emergence phenomena and it has a positive answer in some cases \citep{22,23,24,25}.
The second one, in turn, is about finding systematic and
general conditions ensuring the existence (or nonexistence) of emergence
phenomena. At least to the authors knowledge, there are no such general studies,
specially focused on the strong emergence between field theories.
It is precisely this point that is the focus of the present work.
Indeed we will:
\begin{enumerate}
\item based on Question \ref{question_int}, propose an axiomatization for
the notion of \emph{strong emergence} between field theories;
\item establish sufficient conditions ensuring that a given Lagrangian field
theory emerges from each theory belonging to a certain class of theories.
\end{enumerate}

We will work on the setup of \textit{parameterized field theories}, which are given by families $S[\varphi;\varepsilon]$ of field theories depending on a \textit{fundamental parameter $\varepsilon$}. In the situations described above, $\varepsilon$ is the the noncommutative parameter $\theta^{\mu\nu}$ or the background field $\chi^{\mu\nu}$. In the cases where the emergence was explicitly obtained, it was of summary importance that the parameters $\theta^{\mu\nu}$ and $\chi^{\mu\nu}$ belongs to the \textit{same} class of fields and that the corresponding action functions are defined on the \textit{same} field $A_{\mu}$. Thus, given two parameterized theories $S[\varphi;\varepsilon]$ and $S'[\varphi;\varepsilon']$ we always assume that they are defined on the same fields and that the parameters $\varepsilon$ and $\varepsilon'$ belong to the same space. Keeping Question \ref{question_int} as a motivation, let us say that $S[\varphi;\varepsilon]$ \textit{emerges} from $S'[\varphi;\varepsilon']$ if there is a map $F$ on the space of parameters such that, for every $\varepsilon$ and every field $\varphi$ we have $S[\varphi;\varepsilon]=S'[\varphi;F(\varepsilon)]$. We call $F$ a \textit{strong emergence phenomenon} between $S$ and $S'$.

Our main result states that for certain $S[\varphi;\varepsilon]$ and $S'[\varphi;\varepsilon']$, depending on $\varepsilon$ and $\varepsilon'$ in a suitable parameter space, in the sense that it has some special algebraic structure, then these strong emergence phenomena exist. The formal statement of this result will be presented in Section \ref{sec_emergence_thm} after some technical  digression. But, as a motivation, let us state a particular version of it and show how it can be used to recover an example of emergence between gravity and noncommutativity.

First of all, recall that the typical field theory has a kinetic part and an interacting part. The kinetic part is usually quadratic and therefore of the form $\mathcal{L}_{\operatorname{knt},i}(\varphi_i)=\langle\varphi_i,D_i\varphi_i\rangle_i$, with $i=1,...,N$, where $N$ is the number of fields, $D_i$ are differential operators and $\langle\cdot,\cdot\rangle_i$ are pairings in the corresponding space of fields. Summing the space of fields and letting $\varphi=(\varphi_1,...,\varphi_N)$, $D=\oplus_iD_i$ and $\langle\cdot,\cdot\rangle=\oplus_i \langle\cdot,\cdot\rangle_i$, one can write the full kinetic part as $\mathcal{L}_{\operatorname{knt}}(\varphi)=\langle\varphi,D\varphi \rangle$.  On the other hand, the interacting part is typically polynomial, i.e., it is of the form $\mathcal{L}_{\operatorname{int}}(\varphi)=\langle \varphi,p^l[D_1,...,D_N]\varphi\rangle$, where $l\geq0$ is the degree of the interaction and $p^l[D_1,...,D_N]=\sum_{\vert\alpha\vert\leq l}f_{\alpha}\cdot D^{\alpha}$. Since the space of differential operators constitutes an algebra, it follows that $p^l[D_1,...,D_N]$ is a differential operator too, so that both the kinetic and the interacting parts (and therefore the sum of them, which constitute the typical lagrangians) are of the form $\mathcal{L}(\varphi)=\langle\varphi,D\varphi\rangle$. 

Notice, furthermore, that the pairing $\langle \cdot, \cdot \rangle$ is typically induced by fixed geometric structures (such as metrics) in the spacetime manifold $M$ and in the field bundle $E$. Thus, in the parameterized context the natural dependence on the parameter is on the differential operator, i.e., the typical parameterized Lagrangian theories are of the form $\mathcal{L}(\varphi;\varepsilon)=\langle \varphi, D_{\varepsilon}\varphi \rangle$. In the case of polynomial theories (e.g., those describing interations), it is more natural to assume that the dependence on $D_\varepsilon$ is actually on the coefficient functions $f_{\alpha}$, i.e., $p^l_\varepsilon[D_1,...,D_N]=\sum_{\vert \alpha \vert \leq l}f_{\alpha}(\varepsilon)D^\alpha$. These coefficient functions could be scalar functions or, more generally, parameter-valued functions if we have an action of parameters in differential operators. In this last case, the polynomial is  $p^l_\varepsilon[D_1,...,D_N]=\sum_{\vert \alpha \vert \leq l}f_{\alpha}(\varepsilon)\cdot D^\alpha$, where the dot denotes the action of parameters in differential operators. $\underset{\;}{\underset{\;}{\;}}$

\noindent \textbf{Emergence Theorem} (rough version)
\textit{Let $M$ be a spacetime manifold and $E\rightarrow M$ a field bundle which define a pairing $\langle \cdot ,\cdot \rangle$ in $E$. Let $\mathcal{L}_1(\varphi;\varepsilon)=\langle \varphi, D_{\varepsilon}\varphi \rangle$ be an arbitrary parameterized theory and $\mathcal{L}_2(\varphi;\delta)=\langle \varphi,p^l_{\delta}[D_1,...,D_N]\varphi\rangle$ a polynomial theory, e.g., an interaction term. Suppose that:
\begin{enumerate}
    \item the parameters $\varphi$ are nonnegative real numbers or, more generally, positive semi-definite operators or tensors acting on the space of differential operators;
    \item the dependence of $D_\varepsilon$ on $\varepsilon$ is of the form $D_\varepsilon \varphi = \varepsilon \cdot D\varphi$, where $D$ is a fixed differential operator;
    \item the coefficient $f_\alpha(\delta)$ of $p^l_{\delta}[D_1,...,D_N]$ are nowhere vanishing scalar or parameter-valued functions, and the differential operators $D_1,...,D_N$ have well-defined Green functions.
\end{enumerate}
Then the theory $\mathcal{L}_1(\varphi;\varepsilon)$ emerges from the theory $\mathcal{L}_2(\varphi;\varepsilon)$.$\underset{\;}{\underset{\;}{\;}}$}

Now, looking at \cite{17}, consider again the question of emergent gravity in a four dimensional spacetime. The first order perturbation of a real scalar field theory $\varphi$, coupled to a semi-Riemannian\footnote{In \cite{17} the author considered only Lorentzian spacetimes. However, we notice that in order to get the emergence phenomena the Lorentzian signature was not explicitly used, so that the same holds in the semi-Rimennanian case.} gravitational field $g=\eta+h$, in an abelian gauge background, is given by equation (10) of \cite{17}\footnote{In \cite{17} an expansion of the form $g_{\mu\nu}=\eta_{\mu\nu} +h_{\mu\nu} +h\eta_{\mu\nu}$ was considered, where $h$ is some function, but the author concluded that the emergence phenomenon exists only if $h=0$. Thus, we are assuming this necessary condition from the beginning.}:
\begin{align}
    \mathcal{L}_{\operatorname{grav}}(\varphi;h)&=\partial^{\mu}\varphi\partial_{\mu} \varphi - h^{\mu\nu}\partial_\mu \varphi \partial_\nu \varphi \\
     &= - \langle \varphi, \square\varphi \rangle  + \langle \varphi, (h \cdot D_1 )\varphi\rangle \\
     &= \mathcal{L}_{0}(\varphi)+ \mathcal{L}_{1}(\varphi;h) \label{grav_lagrangian}.
\end{align}
where in the second step we did integration by parts and used the fact that the spacetime manifold is boundaryless. Furthermore, the pairing is just  $\langle\varphi,\varphi'\rangle = \varphi\varphi'$, while $D_1=\partial_\mu\partial_\nu$ and for every 2-tensor $t=t^{\mu\nu}$, we have the trace action $t\cdot D=t^{\mu\nu}\partial_{\mu}\partial_\nu$. In particular, $\eta\cdot D_1 = \partial^\mu\partial_\nu=\square$. On the other hand, the first order expansion in $\theta$ of the Seiberg-Witten dual of a real scalar field theory, in an abelian gauge background, is given by (equation (7) of \cite{17}):
\begin{align}
  \mathcal{L}_{\operatorname{ncom}}(\varphi;\theta)& = \partial^{\mu}\varphi\partial_{\mu} \varphi + 2\theta^{\mu\alpha}F_{\alpha\kappa}\eta^{\kappa\nu}(-\partial_\mu\varphi \partial_\nu \varphi  + \frac{1}{4} \eta_{\mu\nu} \partial^{\rho}\varphi \partial_{\rho}\varphi) \\
  &=  \langle \varphi, (- \square) \varphi \rangle +  \langle \varphi, (f(\theta)\cdot D_2) \varphi \rangle \\
  &= \langle \varphi, p^1_{\theta}[\square,D_2]\varphi \rangle \label{noncomm_lagrangian},
\end{align}
where in the first step we again integrated by parts, and used the fact that $p^1_{\theta}[x,y]$ is the first order polynomial $(-1)\cdot x + f(\theta)\cdot y$, where $f(\theta)=\theta$ and, in coordinates, $D_2= 2F_{\alpha\kappa}\eta^{\kappa\nu}(\partial_\mu\partial_\nu-\frac{1}{4}\eta_{\mu\nu}\square)$. 

Notice that if $h^{\mu\nu}$ is a positive-definite tensor (which is the case in \textit{Riemannian signature}), then $\mathcal{L}_1(\varphi;h)$ in (\ref{grav_lagrangian}) satisfies the hypotheses of the emergence theorem. On the other hand, since in the Riemannian setting $\square$ is the Laplacian and $D_2$ is essentially a combination of generalized Laplacians, both of them have Green functions\footnote{Actually, they are elliptic and thus have Fredholm inverses; this is enough for us.}. Thus, if we forget the trivial case $\theta^{\mu\nu}=0$, then $\mathcal{L}_{\operatorname{ncom}}(\varphi;\theta)$ also satisfies the hypotheses of the emergence theorem. Thus, as a consequence we see that \textit{the gravitational term $\mathcal{L}_1(\varphi;h)$ emerges from the noncommutative theory $\mathcal{L}_{\operatorname{ncom}}(\varphi;\theta)$}, as also proved in \cite{17}. In other words, there is a function $F$ on the space of 2-tensors, such that for every $h^{\mu\nu}$ we have $\mathcal{L}_1(\varphi;h^{\mu\nu})=\mathcal{L}_{\operatorname{ncom}}(\varphi;F(h^{\mu\nu})).$

Some comments.
\begin{enumerate}
    \item In \cite{17} the author proved \textit{explicitly} that gravity emerges from noncommutativy. More precisely, he gave an explicit expression for the function $F(h^{\mu\nu})$. Here, however, our result is only about the existence of such function,
    \item While above we had to assume Rimemannian signature, our main result (Theorem \ref{main_theorem}) applies equally well to the Lorentzian signature. It is different, however, of the rough version stated above.
    \item In \cite{17} analogous emergence phenomena were established in the case of complex scalar fields. Our main theorem also holds for complex fields.
    \item Notice that the scalar theory in the gravitational background  (\ref{grav_lagrangian}) can also be written as $\mathcal{L}_{\operatorname{grav}}(\varphi;h)=\langle\varphi,p^1_h[\square,D_1]\varphi \rangle$, where $p^1_h[x,y]$ is the first order polynomial $(-1)\cdot x + f(h)\cdot y$, where, again $f(h)=h$. Thus, we can also see the gravitational theory as a polynomial theory, defined by the same polynomial, but evaluated in different differential operators. On the other hand, (\ref{noncomm_lagrangian}) can be written as $\mathcal{L}_0(\varphi)+\mathcal{L}_2(\varphi;\theta)$, where $\mathcal{L}_2(\varphi;\theta)=\langle \varphi, (\theta\cdot D_2) \varphi \rangle$. Since $\square$ and $D_1$ are generalized Laplacians, they have Green functions. Thus, if one restricts to the case of noncommutativity parameters $\theta$ which are positive definite (say, e.g., positive definite symplectic forms), then one can again use our emergence theorem to conclude the reciprocal fact: that  $\mathcal{L}_2(\varphi;\theta)$ emerges from $\mathcal{L}_{\operatorname{grav}}(\varphi;h)$, i.e., that \textit{noncommutativity may also emerges from gravity}. 
\end{enumerate}

This work is organized as follows. In Section \ref{sec_1} we propose, based on the lines of this introduction, a formal definition for the notion of strong emergence between parameterized field theories, and we introduced the \textit{emergence problem} which is about determining if there is some emergence phenomenon between such two given theories. We discuss why it is natural (or at least reasonable) to restrict to a certain class of parameterized field theories, defined by certain ``generalized operators''. In Section \ref{sec_emergence_thm} our main theorem is stated. We begin by introducing its ``syntax'', leaving the precise formal statement (or ``semantic'') to Section \ref{sec_formal_statement}. Before giving the proof, which is a bit technical and based on induction arguments, in Section \ref{sec_particular_cases} we analyze some particular cases of the main theorem trying to emphasize its real scope. In Section \ref{sec_proof} the main result is finally proved. $\underset{\;}{\underset{\;}{\;}}$

\noindent\textbf{Remark.}
Although we used the example of gravity emerging from noncommutativity as a motivating context, we would like to emphasize that this paper is not intended to focus on it or in building concrete examples. Indeed, our aim is to propose a formalization to the notion of strong emergence, at least in the context of Lagrangian field theories, and a general strategy to investigate the existence problem for such phenomena. With this remark we admit the need of additional concrete applications of the methods proposed here, which should appear in a future work. In particular, the case of gravity emerging from noncommutative in its difference incarnations, is to appear in a work in progress.

\section{The Strong Emergence Problem}\label{sec_1}
\quad \;\,Recall that a \emph{field theory} on a $n$-dimensional manifold $M$ (regarded as the spacetime)
is given by an action functional $S[\varphi]$, defined in some space of fields (or configuration space) $\operatorname{Fields}(M)$, typically the space of sections of some real or complex vector bundle $E\rightarrow M$,
the \emph{field bundle}. A \emph{parameterized field theory} consists
of another bundle $P\rightarrow M$ (the \emph{parameter bundle}),
a subset $\operatorname{Par}(P)\subset\Gamma(P)$ of global sections
(the \emph{parameters})\emph{ }and a collection $S_{\varepsilon}[\varphi]$
of field theories, one for each parameter $\varepsilon\in\operatorname{Par}(P)$.
A more suggestive notation should be $S[\varphi;\varepsilon]$. So, e.g, for
the trivial parameter bundle $P\simeq M\times\mathbb{K}$ we have
$\Gamma(P)\simeq C^{\infty}(M;\mathbb{K})$ and in this case we say
that we have \emph{scalar parameters.} If we consider only scalar
parameters which are constant functions, then a parameterized theory
becomes the same thing as a 1-parameter family of field theories.
Here, and throughout the paper, $\mathbb{K}=\mathbb{R}$ or $\mathbb{K}=\mathbb{C}$
depending on whether the field bundle in consideration is real or
complex.

We will think of a parameter $\varepsilon$ as some kind of ``physical
scale'', so that for two given parameters $\varepsilon$ and $\varepsilon'$,
we regard $S[\varphi;\varepsilon]$ and $S[\varphi;\varepsilon']$
as \emph{the same theory} in two different physical scales. Notice
that if $P$ has rank $l$, then we can locally write $\varepsilon=\sum\varepsilon^{i}e_{i}$,
with $i=1,...,l$, where $e_{i}$ is a local basis for $\Gamma(P)$.
Thus, locally each physical scale is completely determined by $l$
scalar parameters $\varepsilon^{i}$ which are the fundamental ones.
In terms of these definitions, Question \ref{question_int} has a
natural generalization:
\begin{question}
\label{problem_1}Let $S_{1}[\varphi;\varepsilon]$ and $S_{2}[\psi;\delta]$
be two parameterized theories defined on the same spacetime $M$,
but possibly with different field bundles $E_{1}$ and $E_{2}$, and
different parameter bundles $P_{1}$ and $P_{2}$. Arbitrarily giving
a field $\varphi\in\Gamma(E_{1})$ and a parameter $\varepsilon\in\operatorname{Par}_{1}(P_{1})$,
can we find some field $\psi(\varphi)\in\Gamma(E_{2})$ and some parameter
$\delta(\varepsilon)\in\operatorname{Par}_{2}(P_{2})$ such that $S_{1}[\varphi;\varepsilon]=S_{2}[\psi(\varphi);\delta(\varepsilon)]$?
In more concise terms, are there functions $F:\operatorname{Par}_{1}(P_{1})\rightarrow\operatorname{Par}_{2}(P_{2})$
and $G:\Gamma(E_{1})\rightarrow\Gamma(E_{2})$ such that $S_{1}[\varphi;\varepsilon]=S_{2}[G(\varphi);F(\varepsilon)]$?
\end{question}
We say that the theory $S_{1}[\varphi;\varepsilon]$ \emph{emerges}
\emph{from the theory }$S_{2}[\psi;\delta]$ if the problem above
has a positive solution, i.e, if we can fully describe $S_{1}$ in
terms of $S_{2}$. Notice, however, that as stated the emergence problem
is fairly general. Indeed, if $P_{1}$ and $P_{2}$ have different
ranks, then, by the previous discussion, this means that the parameterized
theories $S_{1}$ and $S_{2}$ have a different number of fundamental
scales, so that we should not expect an emergence relation between
them. This leads us to think of considering only the case in which
$P_{1}=P_{2}$. However, we could also consider the situations in
which $P_{1}\neq P_{2}$, but $P_{2}=f(P_{1})$ is some nice function
of $P_{1}$, e.g, $P_{2}=P_{1}\times P_{1}\times...\times P_{1}$.
In these cases the fundamental scales remain only those of $P_{1}$,
since from them we can generate those in the product. Throughout this
paper we will also work with different theories defined on the same
fields, i.e, $E_{1}=E_{2}$. This will allow us to search for emergence
relations in which $G$ is the identity map $G(\varphi)=\varphi$. 

Hence, after these hypotheses, we can rewrite our main problem, whose
affirmative solutions axiomatize the notion of strong emergence we
are searching for:
\begin{question}
\label{problem_2}Let $S_{1}[\varphi;\varepsilon]$ and $S_{2}[\varphi;\delta]$
be two parametrized theories defined on the same spacetime $M$, on
the same field bundle $E$ and on the parameter bundles $P_{1}$ and
$P_{2}=f(P_{1})$, respectively. Does there exists some map $F:\operatorname{Par}_{1}(P_{1})\rightarrow\operatorname{Par}_{2}(f(P_{1}))$
such that $S_{1}[\varphi;\varepsilon]=S_{2}[\varphi,F(\varepsilon)]$?
\end{question}

Our plan is to show that the problem in Question \ref{problem_2}
has an affirmative solution for certain class of parameterized field theories, which by the absence of a better name we call \textit{generalized parameterized field theories}. This will be obtained from the class of \textit{differential parameterized theories} by means of allowing dynamical operators which are not necessarily differential, but some generalized version of them. 

\subsection{Differential Parameterized Theories}
\quad\;\,In order to motivate the need for looking at a special class of field theories, we begin by
noticing that typically the field theories are \textit{local}, in the sense that they are defined by means of integrating some Lagrangian density $\mathcal{L}(x,\varphi,\partial \varphi,\partial^2\varphi,...)$, i.e, $S[\varphi]=\int_M \mathcal{L}(j^{\infty}\varphi)dx^n$, where $j^{\infty}\varphi=(x,\varphi,\partial \varphi,\partial^2\varphi,...)$ is the jet prolongation. On the other hand, a quick look at the standard examples of field theories shows that, when working in a spacetime without boundary, after
integration by parts and using Stoke's theorem, those 
field theories can be stated, at least locally, in the form $\mathcal{L}(x,\varphi,\partial\varphi)=\langle\varphi,D\varphi\rangle$,
where $\langle\varphi,\varphi'\rangle$ is a nondegenerate pairing
on the space of fields $\Gamma(E)$ and $D:\Gamma(E)\rightarrow\Gamma(E)$
is a differential operator of degree $d$, which means that it can
be locally written as $D\varphi(x)=\sum_{\vert\alpha\vert\leq d}a_{\alpha}(x)\partial^{\alpha}\varphi$,
where $\alpha=(\alpha_{1},...,\alpha_{r})$ is some mult-index, $\vert\alpha\vert=\alpha_{1}+...+\alpha_{r}$
is its degree and $\partial^{\alpha}=\partial_{1}^{\alpha_{1}}\circ...\circ\partial_{r}^{\alpha_{r}}$,
with $\partial_{i}^{l}=\partial^{l}/\partial^{l}x_{i}$. This is
the case, e.g, of $\varphi^{3}$ and $\varphi^{4}$ scalar field theories,
the standard spinorial field theories, Einstein-Hilbert-Palatini-Type theories \cite{eu_EHP_1,eu_EHP_2} as well as Yang-Mills-type theories and certain canonical extensions of them \cite{eu_YMT_1,eu_YMT_2}.
More generally, recall that the first step in building the Feynman
rules of a field theory is to find the (kinematic part of the) operator
$D$ and take its ``propagator'' .

In these examples, the pairing $\langle\varphi,\varphi'\rangle$ is typically symmetric
(resp. skew-symmetric) and the operator $D$ is formally self-adjoint
(resp. formally anti-self-adjoint) relative to that pairing. Furthermore,
$\langle\varphi,\varphi'\rangle$ is usually a $L^{2}$-pairing induced
by a semi-Riemannian metric $g$ on the field bundle $E$ and/or on
the spacetime $M$, while $D$ is usually a generalized Laplacian
or a Dirac-type operator relative to $g$ \citep{costello2011renormalization}.
For example, this holds for the concrete field theories (scalar, spinorial
and Yang-Mills) above. The skew-symmetric case generally arises in
gauge theories (BV-BRST quantization) after introducing the Faddeev-Popov
ghosts/anti-ghosts and it depends on the grading introduced by the
ghost number \citep{costello2011renormalization}.

Another remark, still concerning the concrete situations above,
is that if the metric $g$ inducing the pairing $\langle\varphi,\varphi'\rangle$
is actually Riemannian (which means that the gravitational background
is Euclidean), then $\langle\varphi,\varphi'\rangle$ becomes a genuine
$L^{2}$-inner product and $D$ is elliptic and extends to a bounded
self-adjoint operator between Sobolev spaces \citep{donaldson1990geometry,freed2012instantons}.
Working with elliptic operators is very useful, since they always
admits parametrices (which in this Euclidean cases are the propagators)
and for generalized Laplacians the heat kernel not only exists, but
also has a well-known asymptotic behavior \citep{heat_kernel_generalized_laplacian},
which is very nice in the Dirac-type case \citep{berline2003heat}.

From the discussion above, it is natural to focus on parameterized
theories such that each functional $S[\varphi;\varepsilon]$ is local and defined by a Lagrangian density of the
$\mathcal{L}(\jmath^{\infty}\varphi
)=\langle\varphi,D_{\varepsilon}\varphi\rangle$,
i.e, are determined by a single nondegenerate pairing $\langle\varphi,\varphi'\rangle$
in $\Gamma(E)$, fixed a priori by the nature of $M$ and $E$, and
by a family of differential operators $D_{\varepsilon}\in\operatorname{Diff}(E)$,
one for each parameter $\varepsilon\in\operatorname{Par}(P)$, where
$\operatorname{Diff}(E)=\bigoplus_{d}\operatorname{Diff}^{d}(E;E)$
denotes the $\mathbb{K}$-vector space of all differential operators in $E$ (which is actually a $\mathbb{K}$-algebra which the composition operation) and $\operatorname{Diff}^{d}(E;E)$ is the space of those operators of degree $d$. Thus:

\begin{definition}\label{definition_background}
Let $M$ be a compact and oriented manifold.  A \textit{background for doing emergence theory} (or simply \textit{background}) over $M$ is given by the following data:
\begin{enumerate}
\item a $\mathbb{K}$-vector bundle $E\rightarrow M$ (the field bundle);
\item a pairing $\langle\varphi,\varphi'\rangle$ in $\Gamma(E)$;
\item a parameter bundle $P\rightarrow M$ and a set of parameters $\operatorname{Par}(P)\subset\Gamma(P).$
\end{enumerate}
\end{definition}
\begin{definition}\label{definition_2}
A \textit{differential parameterized theory} in a background over $M$ is a collection of differential operators $D_{\varepsilon}\in\operatorname{Diff}(E)$ with $\varepsilon \in \operatorname{Par}(P)$. The \textit{parameterized Lagrangian density} is given by $\mathcal{L}(\varphi;\varepsilon)=\langle \varphi,D_{\varepsilon}\varphi \rangle$. The \textit{parameterized action functional} is the integral of the parameterized Lagrangian in $M$.
\end{definition}

\subsection{Generalized Parameterized Theories}\label{sec_GPT}

\quad \;\,Sometimes working on the narrow class of field theories defined by differential operators is not enough. For instance, notice that in building the ``propagator'' of a Lagrangian field theory $\mathcal{L}(\jmath^{\infty}\varphi)=\langle \varphi, D\varphi \rangle$ we are actually
finding some kind of ``quasi-inverse'' $D^{-1}$ for the differential operator $D$. For example, in the Riemannian case, where differential operators $D\in\operatorname{Diff}^d(E)$ extend to bounded operators $\hat{D}\in B(W^{d,2}(E))$ in the Sobolev space, if $D$ is elliptic, then building the propagator is equivalent to building parametrices, which in turn can described in terms of Fredholm inverses for $\hat{D}$ \cite{costello2011renormalization}. On the other hand, in the Lorentzian setting (where the spacetime manifold is assumed globally hyperbolic and the typical differential operators are hyperbolic), building the propagator is about finding its advanced and retarded Green functions \cite{hyperbolic_1,hyperbolic_2}.   

Independently of the case, it would be very useful if the quasi-inverse $D^{-1}$ could exist
as a differential operator, i.e., if $D^{-1}\in\operatorname{Diff}(E)$. Indeed, in this case $D^{-1}$ would also define a differential field theory. In particular, in the parameterized case, if each $D_{\varepsilon}$ has a ``quasi-inverse'' described by a differential operator $D^{-1}_{\varepsilon}$, the collection of them define a parameterized differential theory over the same background. In turn, from the physical viewpoint, it would be very interesting if the ``quasi-inverse'' $D^{-1}$ of the differential operator $D$ was a genuine inverse for $D$ in the algebra $\operatorname{Diff}(E)$. Indeed, in this case one could use the relations $D\circ D^{-1}=I$ and $D^{-1}\circ D=I$  in order to get global solutions for the equation of motion $D\varphi=0$ \cite{taylor_II}. 

Notice, however, that the equations $D\circ D^{-1}=I=D^{-1}\circ D$ typically has no solutions in $\operatorname{Diff}(E)$. Indeed, recall that $\operatorname{Diff}(E)=\bigoplus_{d}\operatorname{Diff}^{d}(E;E)$ is a $\mathbb{Z}_{\geq0}$-graded algebra with composition, so that if $D^{-1}$ was a left or right inverse for $D$, then $\deg(D^{-1})=-\deg D$, which has no solution if $\deg D\neq 0$. This forces us to search for extensions of the algebra $\operatorname{Diff}(E)$ such that left and/or right inverses of differential operators may exist. The obvious idea is to consider extensions by adding a negative grading to $\operatorname{Diff}(E)$ getting the structure of a $\mathbb{Z}$-graded $\mathbb{K}$-algebra, and the natural choice is the $\mathbb{Z}$-graded algebra $\operatorname{Psd}(E)$ of pseudo-differential operators, which are defined via symbol-theoretic (i.e., microlocal analysis) approach \cite{algebra_pseudo_0,algebra_pseudo_1}. 

For constant coefficient operators, the right-inverse really exist in $\operatorname{Psd}(E)$. In the case of non-constant coefficients, for Riemannian spacetimes, elliptic operators satisfying specific ellipticity conditions admit right-inverse in $\operatorname{Psd}(E)$  \citep{right_inverse_const_1,right_inverse_const_2,right_inv_1,right_inv_1_2,right_inv_2,right_inv_3,right_inv_4,ghaemi2016study}. For globally hyperbolic Lorentzian spacetimes, at least when restricted to fields with support in the causal cones, Green hyperbolic operators (in particular normally hyperbolic operators) have both left and right inverses in $\operatorname{Psd}(E)$ \cite{hyperbolic_1,hyperbolic_2}. For flat spacetimes, more abstract examples exist \cite{algebra_pseudo_2}. On the other hand, if a differential operator is not right-invertible in $\operatorname{Psd}(E)$ it could be invertible in another extension of $\operatorname{Psd}(E)$, such as in the class of Fourier-type operators \cite{taylor_II}.

Our main result does not depend on the extension of $\operatorname{Diff}(E)$; the only thing we need is the existence of a right-inverse for a differential operator as \textit{some} kind of operator, i.e., we only need some $\mathbb{K}$-algebra $\operatorname{Op}(E)$ such that $\operatorname{Diff}(E)\subset \operatorname{Op}(E)\subset\operatorname{End}(\Gamma(E))$ and such that the subset of differential operators $\operatorname{Op}(E)$ contains nontrivial elements. 

\begin{definition}\label{definition_GB}
Let $M$ be a compact and orientable manifold. A \textit{generalized background} over $M$, denoted by $\operatorname{GB}(M)$ is given by the same data in Definition \ref{definition_background} and, in addition, a $\mathbb{K}$-algebra $\operatorname{Op}(E)\subset\operatorname{End}(\Gamma(E))$ extending $\operatorname{Diff}(E)$ and such that $\operatorname{Diff}(E)\cap R\operatorname{Op}(E)$ contains non-multiplies of the identity, where  $R\operatorname{Op}(E)$ is the set of right-invertible elements of $\operatorname{Op}(E)$.

\end{definition}
\begin{definition}
A \textit{generalized parameterized theory} (GPT) in a generalized background $\operatorname{GB}(M)$ is a collection of generalized operators $\Psi _{\varepsilon}\in \operatorname{Op}(E)$, with $\varepsilon \in \operatorname{Par}(P)$. The corresponding Lagrangian density action funcional are defined analogously to Definition \ref{definition_2}, just replacing $D_\varepsilon$ with $\Psi_{\varepsilon}$.
\end{definition}
\begin{definition}\label{definition_GPT}
We say that a GPT over $M$ is \textit{right-invertible} if the generalized operators $\Psi_{\varepsilon}$ are right-invertible, i.e, if they belongs to $R\operatorname{Op}(E)$ for every $\varepsilon \in \operatorname{Par}(P)$.
\end{definition}

\subsection{Polynomial Parameterized Theories}

\quad\;\, Until this moment we considered theories which are defined by a 1-parameter family of differential (or more general) operators. As discussed, these provide a description of \textit{free} theories and of their parameterized version. We will show, however, that the concept in broad enough to  describe interacting theories as well. In order to do this, recall that the classical examples of interacting field theories has interacting terms given by multivariate polynomials with variables corresponding to operators of different theories at interaction, as discussed at the introductory section.

But, in order to talk about polynomials we need a ring of coefficients. In our parameterized context, the natural idea is to consider coefficients depending on the parameters. This leads us to look at the ring  $\operatorname{Map}(\operatorname{Par}(P);\mathbb{K})$ of scalar functions $f:\operatorname{Par}(P)\rightarrow \mathbb{K}$ on the space of parameters. For a given $l\geq0$ and formal variables $x_1,...,x_r$, let $\operatorname{Map}_l(\operatorname{Par}(P);\mathbb{K})[x_1,...,x_r]$ be the corresponding $\mathbb{K}$-vector space of multivariate polynomials of degree $l$ in the given variables. Thus, an element of it can be  written as $p^{l}[x_{1},...,x_{r}]=\sum_{\vert\alpha\vert\leq l}f_{\alpha}\cdot x^{\alpha}$, where  $f_{\alpha}\in\operatorname{Map}(\operatorname{Par}(P);\mathbb{K})$ and $\alpha$ is a multi-index. If $\Psi_{1},...,\Psi_{r}\in\operatorname{Op}(E)$
are fixed generalized operators and $p^{l}[x_{1},...,x_{r}]$ is a polynomial
as above, recalling that $\operatorname{Op}(E)$ is a $\mathbb{K}$-algebra, by means of replacing the formal variables $x_{i}$
with the operators $\Psi_{i}$ we get a family of operators $p^{l}[\Psi_{1},...,\Psi_{r}](\varepsilon)=\sum_{\vert\alpha\vert\leq l}f_{\alpha}(\varepsilon)\Psi^{\alpha}$, with $\varepsilon\in \operatorname{Par}(P)$. The following definitions are then natural.

\begin{definition}\label{definition_PPT}
Let $M$ be a compact and oriented manifold. A \textit{polynomial parameterized theory \emph{(}PPT\emph{)} of degree $l\geq0$ in $r$ variables}, defined on a generalized context $\operatorname{GB}(M)$, is given by an element $p^l$ of $\operatorname{Map}_l(\operatorname{Par}(P);\mathbb{K})[x_1,...,x_r]$ and by operators $\Psi_i \in \operatorname{Op}(E)$, with $i=1,...,r$. The parameterized Lagrangian and the parameterized action functional are $$\mathcal{L}(\jmath^{\infty}\varphi;\varepsilon)=\langle \varphi, p^{l}[\Psi_{1},...,\Psi_{r}](\varepsilon)\varphi \rangle \quad \text{and} \quad S[\varphi;\varepsilon]=\ll \varphi, p^{l}[\Psi_{1},...,\Psi_{r}](\varepsilon)\varphi \gg,$$
while the extended Lagrangian and the extended action functional are defined analogously by means of replacing $\psi_i$ with $\widetilde{\Psi_i}$.
\end{definition}

Now, to check that the definition above really captures the standard examples of interacting terms, if the interaction to be described is of $j>1$ different fields, the corresponding PPT is usually of $j$ variables and such that $E\simeq \oplus_i E_i$, with $i=1,...,j$, and $\Psi_i=(0,...,\Phi_i,...,0)$, where $\Psi_i\in \Gamma (E)$ and $\Phi_i\in \Gamma(E_i)$. Here, $E_i$ is the field bundle of the $i$th field theory\footnote{Gauge field with gauge group $G$ are incorporated as follows. Recall that they are principal $G$-connections, which can be regarded as 1-forms in $M$ taking values in the adjoint $G$-bundle $E_{\mathfrak{g}}$. Thus, we just take $E_i=TM^*\otimes E_{\mathfrak{g}}$.}. 

\begin{rem}
Every PPT of arbitrary degree $l$ and in arbitrary number $r$ of variables is a GPT, on the same generalized background, with parameterized operator $\Psi_\varepsilon = p^{l}[\Psi_{1},...,\Psi_{r}](\varepsilon)$, so that the concept of GPT is really broad enough to describe both free and interacting terms of a typical Lagrangian field theory.
\end{rem}

Closing this section, notice that a priori we have two possible definitions of right-invertibility for a PPT. Since by the above every PPT is a GPT, we could say that a PPT is \textit{right-invertible} if it is as a GPT, i.e., if for every $\varepsilon$ the generalized operator $p^{l}[\Psi_{1},...,\Psi_{r}](\varepsilon)$ is right-invertible. But we could also define a \textit{right-invertible} PPT as such that each generalized operator $\Psi_i$, with $i=1,...,r$, is right-invertible. These conditions are very different. Indeed, the first one is about the invertibility of \textit{multivariate polynomials} in noncommutative variables, while the second one is about the invertibility of the \textit{variables} of a multivariate polynomial. Thus, the first one relies on constraints on the polynomials $p^l$, while the second one is a condition only on the operators $\Psi_i$. Luckly, we will need only the second condition, leading us to the define:

\begin{definition}
We say that a PPT in $r$ variables is \textit{right-invertible} if the defining operators $\Psi_i$, with $i=1,...,r$, are right-invertible.
\end{definition}

\subsection{Higher Parameter Degree}
\quad\;\,Notice that, as in Definitions \ref{definition_2}, \ref{definition_GPT} and \ref{definition_PPT}, a parameterized field theory has as parameters a \textit{single} element $\varepsilon \in \operatorname{Par}(P)$.  On the other hand, as discussed at the beginning of Section \ref{sec_1}, a physical theory may depend on many fundamental scales. This leads us to consider parameterized theories depending on a list $\varepsilon(\ell)=(\varepsilon_1,...,\varepsilon_\ell)$ of elements of $\operatorname{Par}(P)$, i.e, such that $\varepsilon(\ell)\in \operatorname{Par}(P)^{\ell}$. In this case, we say that the number $\ell\geq1$ is the \textit{fundamental degree} of the parameterized theory. We will also use
the convention that $\operatorname{Par}(P)^{0}$ is a singleton, whose
element we denote by $\varepsilon(0)$. More precisely, we have the following definition.

\begin{definition}
Let $M$ be a compact and oriented manifold. A \textit{GPT of degree $\ell\geq 0$} in a generalized background $\operatorname{GB}(M)$ is given by a collection of generalized operators $\Psi_{\varepsilon(\ell)}\in \operatorname{Op}(E)$. Particularly, a \textit{PPT of degree $(l,\ell)$ in $r$ variables} is given by an element $p^l_{\ell}\in \operatorname{Map}_l(\operatorname{Par}(P)^{\ell};\mathbb{K})[x_1,...,x_r]$ and by operators $\Psi_i\in \operatorname{Op}(E)$, with $i=1,...,r$.
\end{definition}

\section{The Emergence Theorem}\label{sec_emergence_thm}
\quad\;\,After the previous digression we are now ready to state our main theorem. The syntax is the following:$\underset{\;}{\underset{\;}{\;}}$

\noindent \textbf{Main Theorem Syntax.} \textit{Let $M$ be a compact and oriented manifold with a fixed generalized background $\operatorname{GB}(M)$. Let $S_1$ be a GPT of degree $\ell$ and $S_2$ a PPT of degree $(l,\ell')$ in $r$ variables. Suppose that:
    \begin{enumerate}
        \item $S_1$ depends on the fundamental parameters $\varepsilon(\ell)$ in a ``linear'' way;
        \item the PPT theory $S_2$ is right-invertible and its coefficients $f_\alpha$ are ``suitable'' functions in a sense that depends on $r,l$ and $\ell'$.
    \end{enumerate}
Then $S_1$ emerges from} $S_2$.$\underset{\;}{\underset{\;}{\;}}$

In order to turn this syntax into a rigorous statement, let us described what we meant by a ``linear'' dependence on the parameters and by ``suitable'' functions. Since the term ``linear'' typically means ``preservation of some algebraic structure'', it is implicit that we will need to assume some algebraic structure on the space of parameters $\operatorname{Par}(P)^{\ell}$. To begin, we will require an associative and unital $\mathbb{K}$-algebra structure, whose sum and multiplication we will denote by  ``$+^{\ell}$'' and ``$*^{\ell}$'', respectively, or simply by ``$+$'' and ``$*$'' when the number $\ell$ of fundamental parameters is implicit. Thus,

\begin{definition}
A GPT of degree $\ell$ in a generalized background $\operatorname{GB}(M)$ is  \textit{linear} (or \textit{homomorphic}) if $\operatorname{Par}(P)^{\ell}$ has a $\mathbb{K}$-algebra structure and the rule $\varepsilon(\ell)\mapsto \Psi_{\varepsilon(\ell)}$ is a $\mathbb{K}$-algebra homomorphism.
\end{definition}

Some properties of the emergence phenomena (as those proved in Subsection \ref{sec_properties}) depends only on the preservation of sum ``$+$'', scalar multiplication, or product ``$*$''. This motivates the following definition:

\begin{definition}
Under the same notations and hypotheses of the last definition, we say that a GPT is \textit{additive} (resp. \textit{multiplicative}) if the rule $\varepsilon(\ell)\mapsto \Psi_{\varepsilon(\ell)}$ is $\mathbb{K}$-linear (resp. a multiplicative monoid homomorphism). If it is only required $\Psi_{c\varepsilon(\ell)}=c\Psi_{\varepsilon(\ell)}$ we say that the GPT is \textit{scalar invariant}.
\end{definition}

On the other hand, notice that in a PPT the generalized operators $\Psi$ always appear multiplied by a \textit{number} $f(\varepsilon(\ell))\in \mathbb{K}$, which depends on the parameters $\varepsilon(\ell)$. However, if one recalls that the parameters are interpreted as fundamental scales, it should be natural to consider parameters $\varepsilon(\ell)$ (instead of numbers $f(\varepsilon(\ell))\in \mathbb{K}$ assigned to them)  multiplying the operators $\Psi$. Thus, we need an action $
\cdot^{\ell}:\operatorname{Par}(P)^{\ell}\times\operatorname{Op}(E)\rightarrow\operatorname{Op}(E)$. But, since by Definition \ref{definition_GB} and by the above assumption both $\operatorname{Op}(E)$ and $\operatorname{Par}(P)^{\ell}$ are $\mathbb{K}$-algebras, it is natural to require some compatibility between the action $\cdot^{\ell}$ and these $\mathbb{K}$-algebra structures. More precisely, we will assume that $\cdot^{\ell}$ is $\mathbb{K}$-bilinear and that the following condition are satisfied for every $\Psi,\Psi'\in\operatorname{Op}(E)$ and every $\varepsilon(\ell)\in \operatorname{Par}(P)^{\ell}$:
\begin{equation}
(\varepsilon(\ell)\cdot^{\ell}\Psi)\circ\Psi'=\varepsilon(\ell)\cdot^{\ell}(\Psi\circ\Psi') \quad \text{and} \quad (\varepsilon(\ell)*\delta(\ell))\cdot^{\ell} \Psi =  \varepsilon(\ell)\cdot^{\ell}(\delta(\ell)\cdot^{\ell}\Psi). \label{compatibility_action}
\end{equation}

\begin{rem}
If not only conditions (\ref{compatibility_action}) are satisfied, but also 
\begin{equation}\label{compatibility_2}
   \Psi\circ(\varepsilon(\ell)\cdot^{\ell} \Psi')=\varepsilon(\ell) \cdot^{\ell} (\Psi\circ \Psi') 
\end{equation}
is satisfied for every $\varepsilon(\ell)$ and every $\Psi,\Psi'$, then the algebra $\operatorname{Par}(P)^{\ell}$ must be commutative. See Comment \ref{comment_3}. On the other hand, if (\ref{compatibility_action}) is satisfied for every $\varepsilon(\ell)$ and every $\Psi,\Psi'$, but (\ref{compatibility_2}) is satisfied for every $\varepsilon(\ell)$ and a \textit{single} $ \Psi' = \Psi$, then $\operatorname{Par}(P)^{\ell}$ need not be commutative. \end{rem}

Now, to make a rigorous sense of Condition 2 in the previous syntactic statement, let us clarify what we mean by a ``suitable'' function $f \in\operatorname{Map}(\operatorname{Par}(P)^{\ell};\mathbb{K})$. In few words, a function $f$ is ``suitable'' if it belongs to some functional calculus. For us, a \textit{functional calculus of degree $\ell$} is a subset $C_{\ell}(P;\mathbb{K})$ of $\operatorname{Map}(\operatorname{Par}(P)^{\ell};\mathbb{K})$ endowed with a function $\Psi^{\ell}_{-}:C_{\ell}(P;\mathbb{K})\rightarrow \operatorname{Op}(E)$ assigning to each function $f\in C_{\ell}(P;\mathbb{K})$ a generalized operator $\Psi_f^{\ell}$, which is compatible with the action $\cdot^{\ell}$, in the sense that
\begin{equation}
\Psi_{f}^{\ell}\circ[f(\varepsilon(\ell))\Psi]=\varepsilon(\ell)\cdot^{\ell}\Psi.\label{functional_calculus}
\end{equation}

\begin{definition}
A functional calculus is \textit{unital} if $C_\ell(P;\mathbb{K})$ contains the constant function $f\equiv 1$. 
\end{definition}

We notice that, although our main emergence theorem will depend on the choice of a unital functional calculus, the existence of them is not an obstruction.

\begin{lem}\label{lemma_functional_calculus}
In every generalized background $\operatorname{GB}(M)$, for every $\ell\geq 0$ there exists a unique functional calculus of degree $\ell$ such that $C_\ell(P;\mathbb{K})$ is the set of nowhere vanishing functions $f:\operatorname{Par}(P)^{\ell}\rightarrow \mathbb{K}$ if $\ell >0$, and the constant function $f\equiv 1$ if $\ell =0$.
\end{lem}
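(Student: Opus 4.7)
The plan is to treat the two cases $\ell=0$ and $\ell>0$ in parallel: the compatibility equation (\ref{functional_calculus}), specialized to $\Psi=\operatorname{id}$, pins down $\Psi_f^{\ell}$ uniquely via $\mathbb{K}$-bilinearity of composition, and then the first identity of (\ref{compatibility_action}) (which can be rewritten as $(\varepsilon(\ell)\cdot^{\ell}\operatorname{id})\circ\Psi=\varepsilon(\ell)\cdot^{\ell}\Psi$ by choosing the ``inner'' slot to be $\operatorname{id}$) verifies the forced formula for every $\Psi$. So the whole argument is essentially one computation anchored at $\Psi=\operatorname{id}$.

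For $\ell=0$ the space $\operatorname{Par}(P)^{0}$ is a singleton $\{\varepsilon(0)\}$, the admissible $f$ is the constant $1$, and the specialization at $\Psi=\operatorname{id}$ gives directly $\Psi_{1}^{0}=\varepsilon(0)\cdot^{0}\operatorname{id}$, which is uniqueness. For existence, plugging this expression back into (\ref{functional_calculus}) and invoking (\ref{compatibility_action}) produces $\Psi_{1}^{0}\circ\Psi=\varepsilon(0)\cdot^{0}\Psi$ for every $\Psi$, as required.

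For $\ell>0$ and $f$ nowhere vanishing, specializing (\ref{functional_calculus}) at $\Psi=\operatorname{id}$ and using $\mathbb{K}$-bilinearity reduces the condition to $f(\varepsilon(\ell))\,\Psi_f^{\ell}=\varepsilon(\ell)\cdot^{\ell}\operatorname{id}$; dividing by the nowhere-vanishing scalar $f(\varepsilon(\ell))$ forces
\[
\Psi_f^{\ell}=f(\varepsilon(\ell))^{-1}\,(\varepsilon(\ell)\cdot^{\ell}\operatorname{id}),
\]
which yields uniqueness. For existence, one substitutes this expression into the left-hand side of (\ref{functional_calculus}) and, using $\mathbb{K}$-bilinearity together with (\ref{compatibility_action}), obtains $f(\varepsilon(\ell))^{-1}f(\varepsilon(\ell))\,(\varepsilon(\ell)\cdot^{\ell}\operatorname{id})\circ\Psi=\varepsilon(\ell)\cdot^{\ell}\Psi$, which is the desired right-hand side.

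The only real subtlety is interpretive rather than computational: the displayed formula explicitly depends on $\varepsilon(\ell)$, so for $\ell>0$ the lemma is sensible only after one reads $\Psi_f^{\ell}$ as a parameterized generalized operator, i.e., as an element of $\operatorname{Op}(E)^{\operatorname{Par}(P)^{\ell}}$, which is precisely the datum of a GPT of degree $\ell$. Once this convention is fixed, both existence and uniqueness in both cases collapse to the one-line computation anchored at $\Psi=\operatorname{id}$.
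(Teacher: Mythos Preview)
Your proof is correct and follows essentially the same approach as the paper's: specialize the compatibility equation at $\Psi=I$ to force the formula $\Psi_f^{\ell}=(\varepsilon(\ell)\cdot^{\ell}I)/f(\varepsilon(\ell))$, then verify existence via the first identity in (\ref{compatibility_action}). Your closing remark about the $\varepsilon(\ell)$-dependence of $\Psi_f^{\ell}$ is a fair observation that the paper's proof passes over silently.
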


\begin{proof}
First, assume existence for every $\ell \geq 0$. Uniqueness in the case $\ell =0$ is obvious. In the case $\ell >0$, from the existence hypothesis we have $\Psi_{f}^{\ell}\circ[f(\varepsilon(\ell))\Psi]=\varepsilon(\ell)\cdot^{\ell}\Psi$
for every $f$, $\Psi$ and $\varepsilon(\ell)$, so that $\Psi_{f}^{\ell}\circ\Psi=(\varepsilon(\ell)\cdot^{\ell}\Psi)/f(\varepsilon(\ell))$.
In particular, for $\Psi=I$, we get $\Psi_{f}^{\ell}=(\varepsilon(\ell)\cdot^{\ell}I)/f(\varepsilon(\ell))$, proving uniqueness.
In order to prove existence, for each $\ell\geq 0$ define $\Psi_{f}^{\ell}=(\varepsilon(\ell)\cdot^{\ell}I)/f(\varepsilon(\ell))$,
so that 
\[
\Psi_{f}^{\ell}\circ[f(\varepsilon(\ell))\Psi]=(\varepsilon(\ell)\cdot^{\ell}I)\circ\Psi=\varepsilon(\ell)\cdot^{\ell}(I\circ\Psi)=\varepsilon(\ell)\cdot^{\ell}\Psi,
\]
where in the last step we used the compatibility between $\cdot^{\ell}$
and $\circ$, as described in (\ref{compatibility_action}).
\end{proof}

Finally, let us state a few more necessary technical assumptions:

\begin{enumerate}
    \item \textit{The $\mathbb{K}$-algebra of fundamental parameters $\operatorname{Par}(P)^{\ell}$ is required to have square roots.} This means that the function $\varepsilon(\ell)\mapsto \varepsilon(\ell)*\varepsilon(\ell)$ is surjective, i.e, for every $\varepsilon(\ell)$ we there is another $\sqrt{\varepsilon(\ell)}\in \operatorname{Par}(P)^{\ell}$ such that $$(\sqrt{\varepsilon(\ell)})^2=\sqrt{\varepsilon(\ell)}*\sqrt{\varepsilon(\ell)}=\varepsilon(\ell).$$
    
    \item \textit{Right multiplication by $I$ is injective}. More precisely, for every $\varepsilon(\ell),\delta(\ell)\in \operatorname{Par}(P)^{\ell}$, if $\varepsilon(\ell)\cdot^{\ell}I=\delta(\ell)\cdot ^{\ell}I$, then $\varepsilon(\ell)=\delta(\ell)$, i.e., $\varphi_i=\delta_i$ for $i=1,...,\ell$. \label{injective_action}
\end{enumerate}

$\quad \;\,$ Some comments concerning these technical conditions:
\begin{enumerate}
    \item The square roots $\sqrt{\varepsilon(\ell)}$ in condition 1. need not be unique.
    \item Condition \ref{injective_action} above and compatibility conditions (\ref{compatibility_action}) imply that the right multiplication $r^{\ell}_{\Psi}:\operatorname{Par}^{\ell}(P)\rightarrow \operatorname{Op}(E)$ is injective for every right-invertible operator $\Psi\in R\operatorname{Op}(E)$. Thus, for every such $\Psi$, the map $r^\ell_{\Psi}$ is actually an isomorphism 
$\operatorname{Par}(P)^{\ell}\simeq\operatorname{Par}(P)^{\ell}\cdot^{\ell}\Psi$
between its domain and its image.
\item If both compatibility conditions (\ref{compatibility_action}) and (\ref{compatibility_2}) are satisfied, then Condition \ref{injective_action} above implies that the $\mathbb{K}$-algebra $(\operatorname{Par}(P)^{\ell},*^{\ell})$ is commutative. This follows basically from the Eckmann-Hilton argument \cite{eckmann_1,eckmann_2}. \label{comment_3}
\end{enumerate}

Now, suppose that $\operatorname{Par}(P)^{\ell}$ has a $\mathbb{K}$-algebra structure. Then $\operatorname{Par}(P)^{k\ell}$ has an induced $\mathbb{K}$-algebra structure, with $k>0$, given by componentwise sum and multiplication. If $\operatorname{Par}(P)^{\ell}$ has square roots, then $\operatorname{Par}(P)^{k\ell}$ has too, given by $\sqrt{\varepsilon(k\ell)}=(\sqrt{\varepsilon(\ell_1)},...,\sqrt{\varepsilon(\ell_k)})$, where $\varepsilon(k\ell)=(\varepsilon(\ell_1),...,\varepsilon(\ell_k))$ and $\varepsilon(\ell_i)=(\varepsilon_{i,1},...,\varepsilon_{i,\ell})$.  On the other hand, since $\operatorname{Par}(P)^{l\ell}$ embeds in $\operatorname{Par}(P)^{k\ell}$ as $\varepsilon(\ell)\mapsto (\varepsilon(\ell_1),...,\varepsilon(\ell_l),0,...,0)$, if $l\leq k$, it follows that every action $\cdot^{k\ell}$ of $\operatorname{Par}(P)^{k\ell}$ can be pulled back to an action $\cdot^{l\ell}$ of $\operatorname{Par}(P)^{l\ell}$. 

If  $C_{k\ell}(P;\mathbb{K})\subset \operatorname{Map}(\operatorname{Par}(P)^{k\ell};\mathbb{K})$ is a set of functions, pullback by the inclusion $\imath :\operatorname{Par}(P)^{l\ell} \hookrightarrow \operatorname{Par}(P)^{k\ell}$ above defines a new set of functions $C_{l\ell;k}(P;\mathbb{K})\subset \operatorname{Map}(\operatorname{Par}(P)^{l\ell};\mathbb{K})$. Furthermore, if $C_{k\ell}(P;\mathbb{K})$ is actually a functional calculus defined by a map $\Psi^{k\ell}_{-}:C_{k\ell}(P;\mathbb{K})\rightarrow \operatorname{Op}(E)$, we have an induced functional calculus in $C_{l\ell;k}(P;\mathbb{K})$, defined by the function $\Psi^{l\ell;k}_{-}:C_{l\ell;k}(P;\mathbb{K})\rightarrow \operatorname{Op}(E)$ such that $\Psi^{l\ell;k}_{\imath \circ f}=\Psi^{k\ell}_f$. Notice that if $C_{k\ell}(P;\mathbb{K})$ is unital and/or satisfies the third technical conditions above, then $C_{l\ell;k}(P;\mathbb{K})$ does.

\subsection{Formal Statement}\label{sec_formal_statement}

$\quad\;\,$ We can now rigorously state the main result of this paper. First, notice that the three technical conditions of last section are about algebraic properties of the space $\operatorname{Par}^{\ell}(P)$ of fundamental parameters and on its action on the algebra $\operatorname{Op}(E)$ of generalized operators. On the other hand, recall from Definition \ref{definition_GB} that both $\operatorname{Par}^{\ell}(P)$ and $\operatorname{Op}(E)$ are part of the data defining a generalized background. Thus, those conditions  are actually conditions on the underlying generalized background $\operatorname{GB}(M)$. This motivates the following definition.
\begin{definition}\label{def_GB_type}
Let $M$ be a compact and oriented smooth manifold and let $\ell\geq0$ and $k>0$ be non-negative integers. We say that a generalized background $\operatorname{GB}(M)$ is of \textit{$(\ell,k)$-type} if:
\begin{enumerate}
    \item the space of fundamental parameters $\operatorname{Par}^{\ell}(P)$ has an structure of $\mathbb{K}$-algebra with square roots;
    \item the induced $\mathbb{K}$-algebra   $\operatorname{Par}^{k\ell}(P)$ acts in $\operatorname{Op}(E)$ by an action $\cdot^{k\ell}$ which is injective at the identity operator $I$ and compatible with a functional calculus $C_{k\ell}(P;\mathbb{K})$.
\end{enumerate}
\end{definition}

Our main theorem is then the following:

\begin{thm}[Emergence Theorem]\label{main_theorem}
Let $M$ be a compact and oriented manifold and let $\operatorname{GB}(M)$ be a generalized background of $(\ell,k)$-type, with $k>0$. Let $S_1$ be a GPT of degree $\ell$ and let $S_2$ be a PPT of degree $(l,\ell')$ in $r$ variables, where $\ell '=k'\ell$, with $0<k'\leq k$, defined on $\operatorname{GB}(M)$. Assume:
\begin{enumerate}
    \item $S_1$ is homomorphic;
    \item $S_2$ is right-invertible and the coefficient functions $f_{\alpha}:\operatorname{Par}(P)^{k'\ell}\rightarrow \mathbb{K}$ belongs to the functional calculus $C_{k'\ell;k}(P;\mathbb{K})$. 
\end{enumerate}
Then $S_1$ emerges from $S_2$.
\end{thm}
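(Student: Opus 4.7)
The task is to produce a map $F : \operatorname{Par}(P)^{\ell} \rightarrow \operatorname{Par}(P)^{k'\ell}$ such that for every $\varepsilon(\ell)$ the operator identity
$$\Psi_{\varepsilon(\ell)} \;=\; \sum_{|\alpha|\leq l} f_{\alpha}(F(\varepsilon(\ell)))\,\Psi^{\alpha}$$
holds in $\operatorname{Op}(E)$; since the parameterized Lagrangians of $S_{1}$ and $S_{2}$ are both built from the same pairing $\langle\cdot,\cdot\rangle$ on $\Gamma(E)$, this operator-level identity implies the functional equality $S_{1}[\varphi;\varepsilon(\ell)]=S_{2}[\varphi;F(\varepsilon(\ell))]$ required by Question \ref{problem_2}. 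So the whole proof reduces to constructing $F$ with this property.

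My strategy is to proceed by induction, primarily on the polynomial degree $l$ of $S_{2}$, dealing with the single-variable case $r=1$ first and then extending to arbitrary $r$ by isolating each factor $\Psi_{i}$ in turn. The inductive scaffolding is meant to exploit the three structural inputs supplied by the hypotheses: the right-invertibility of each $\Psi_{i}$ (so that one can compose by a right-inverse to peel off a distinguished monomial and lower the effective degree); the functional calculus relation (\ref{functional_calculus}), which trades the scalar multiplication $f_{\alpha}(F(\varepsilon(\ell)))\cdot(-)$ for the intrinsic action $F(\varepsilon(\ell))\cdot^{k'\ell}(-)$; and the compatibility conditions (\ref{compatibility_action}) together with the injectivity of right-multiplication by $I$ built into Definition \ref{def_GB_type}, which jointly allow the components of $F(\varepsilon(\ell))$ to be read off from an operator equation. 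The homomorphism property of $S_{1}$, combined with the existence of square roots in $\operatorname{Par}(P)^{\ell}$, is what permits the factorization $\Psi_{\varepsilon(\ell)} = \Psi_{\sqrt{\varepsilon(\ell)}}\circ\Psi_{\sqrt{\varepsilon(\ell)}}$, producing a compositional form for the left-hand side that is structurally comparable to the polynomial expansion on the right. The role of the higher-degree parameter space $\operatorname{Par}(P)^{k'\ell}\hookrightarrow\operatorname{Par}(P)^{k\ell}$ is to supply the $k'$ independent blocks of $F(\varepsilon(\ell))$ needed to accommodate the coefficient data of the various monomials simultaneously, with the pullback functional calculus $C_{k'\ell;k}(P;\mathbb{K})$ providing the "inverse" needed at each stage.

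The hardest step will be the \emph{simultaneous} matching of coefficients across monomials: because $\operatorname{Op}(E)$ need not be commutative and only right-inverses are guaranteed, every elimination must respect a strict left/right asymmetry, and the monomials $\Psi^{\alpha}$ do not commute with one another in general. Consequently, the components of $F(\varepsilon(\ell))$ cannot be chosen independently to satisfy one monomial constraint at a time; the choices made for one multi-index feed back into the equations for the others, and the induction has to carry enough information to ensure that a single, globally consistent $F(\varepsilon(\ell))$ resolves all of them. Managing this coupling -- while simultaneously keeping track of which functional-calculus element $\Psi_{f}$ is applied on which side of which monomial -- is the technical heart of the argument, and the precise numerology relating $\ell$, $\ell'$, $k'$ and $k$ in Definition \ref{def_GB_type} is, I expect, exactly what is required to make the bookkeeping close.
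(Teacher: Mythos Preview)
Your proposal identifies the right structural ingredients --- right-inverses, the functional-calculus relation (\ref{functional_calculus}), and the factorization $\Psi_{\varepsilon}=\Psi_{\sqrt{\varepsilon}}\circ\Psi_{\sqrt{\varepsilon}}$ coming from the homomorphism property plus square roots --- but it is missing the organizing idea that makes the argument close, and this is exactly why you anticipate a ``hardest step'' that the paper never encounters.

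The paper does \emph{not} construct $F$ by matching coefficients monomial by monomial. Instead it first isolates two closure properties of the emergence relation itself (Lemmas~\ref{lemma_4} and~\ref{lemma_3}): if $S_{1}$ is homomorphic and emerges from each of two theories $A$ and $B$, then $S_{1}$ also emerges from $A\circ B$ and from $A+B$. The square-root factorization you mention is used precisely here, once, to prove closure under composition --- not inside the main induction. With these closure lemmas in hand, the problem collapses to showing that $S_{1}$ emerges from each \emph{single} term $g(\delta)\Psi^{j}$ with $\Psi$ right-invertible and $g$ in the functional calculus, and this is a short direct computation (Lemma~\ref{lemma_2}). The polynomial is then rebuilt from such terms by iterated sums and compositions; at every step a fresh emergence map is produced and composed with the previous ones. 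No ``globally consistent'' $F$ across monomials is ever solved for, and the $k'$ blocks of $\operatorname{Par}(P)^{k'\ell}$ play no role as independent coefficient slots.

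So the gap in your plan is concrete: you set yourself the task of solving a coupled system for the components of $F(\varepsilon(\ell))$, acknowledge that you do not yet see how the bookkeeping closes, and hope the numerology of $(\ell,\ell',k',k)$ will save you. It will not, because that numerology is not what resolves the coupling --- the closure lemmas are. Relatedly, the paper's outer induction is on the number $r$ of variables (via $R[x_{1},\ldots,x_{q+1}]\simeq R[x_{1},\ldots,x_{q}][x_{q+1}]$ and Lemma~\ref{lemma_5}), with the univariate case handled separately by an inner induction on the degree (Lemma~\ref{lemma_4_1}); your plan to induct primarily on $l$ is workable for $r=1$ but does not by itself extend to several variables without the same closure machinery.
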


Before proving the emergence theorem, let us say that the proof which will be given here can be generalized, with basically the same steps and arguments, in some directions (for more details, see \cite{eu_emergence_v1}):

\begin{enumerate}
    \item \textit{the spacetime manifold $M$ need not be compact nor orientable}. Notice that these conditions were used only to define integrals. Thus, instead, one could only assume integrability
conditions on global sections (such as compact supportness) and consider Lagrangians as taking values
on general densities. \label{generalization_1}
\item \textit{the space of fundamental parameters need not be a $\mathbb{K}$-algebra, but a more lax algebraic entity}. As will be clear during the proofs, we only need the ``nonnegative'' part of a $\mathbb{K}$-algebra. More precisely, what we really need is that the set $\operatorname{Par}(P)^\ell$ can be regarded as the subset of a $\mathbb{K}$-algebra $A$, which is closed by sum, multiplication, and scalar multiplication by $\mathbb{R}_{\geq0}$. Notice that if $P$ is a $\mathbb{K}$-algebra bundle, then $\operatorname{Par}(P)^\ell$ can always be realized as a subset of the $\mathbb{K}$-algebra $\Gamma(P)^\ell$.
\item the coefficient functions $f_\alpha$ need not be scalar functions, but actually maps $f_\alpha:\operatorname{Par}(P)^{\ell'}\rightarrow \operatorname{Par}(P)^{\ell}$, so that the scalar multiplication $f_\alpha(\varepsilon(\ell))\Psi$ is replaced by the action $\cdot^{\ell}$. The notions of functional calculus, etc., can be defined in an analogous way such that the syntax of the theorem remains the same\footnote{Recall from the discussion at Introduction that parameter-valued coefficient functions plays an important role in the description of emergent gravity in terms of emergence phenomena. Further explanations will appear in a work in progress.}.

\label{generalization_2}
  
\end{enumerate}

\subsection{Some Particular Cases}\label{sec_particular_cases}

\quad\;\,Let $M$ be a compact and orientable manifold, $\mathbb{K}=\mathbb{C}$ an $E=M\times \mathbb{C}$ the complex trivial line bundle, regarded as a field bundle with space of fields given by complex scalar functions $C^{\infty}(M;\mathbb{C})$, endowed with the pairing $\langle\varphi,\psi\rangle=\varphi \overline{\psi}$. In addition, let $P=M\times \mathbb{C}$, viewed now as the parameter bundle, and take the constant functions as parameters, so that $\operatorname{Par}(P)\simeq \mathbb{C}$. This data clearly defines a background over $M$. Let $\operatorname{Op}(E)$ be the complex algebra $\operatorname{Psd}(M\times\mathbb{C})$ of pseudo-differential operators. By the discussion of Section \ref{sec_GPT} we then have a generalized background. Notice that $\operatorname{Par}(P)\simeq \mathbb{C}$ is a $\mathbb{C}$-algebra with square roots and with their action in $\operatorname{Psd}(M\times\mathbb{C})$ via scalar multiplication, if $z\cdot I=z'\cdot I$, then clearly $z=z'$. Finally, let $C_{1}(P;\mathbb{C})$ be the unital functional calculus given by nowhere vanishing functions $f:\operatorname{Par}(P)\simeq \mathbb{C}\rightarrow \mathbb{C}$, as in Lemma \ref{lemma_functional_calculus}, defining a generalized background over $M$ of $(1,1)$-type.

As a particular case of Theorem \ref{main_theorem} we then have:

\begin{cor}
Let $M$ be a compact and oriented manifold and let $\operatorname{GB}(M)$ be the generalized background of $(1,1)$-type defined above. Let $D\in \operatorname{Diff}(M\times\mathbb{C})$ be any idempotent differential operator, i.e., there exists $n>0$ such that $D^{2 n}=D^n$. For given $r>0$ and $l\geq0$, let $p^l[x_1,..,x_r]$ be a polynomial of degree $l$ in $r$ variables and whose coefficients $f_{\alpha}:\mathbb{C}\rightarrow \mathbb{C}$ are nowhere vanishing functions. Let $D_1,...,D_r\in \operatorname{Diff}(M\times\mathbb{C})$ other differential operators and assume one of the following conditions:
\begin{enumerate}
    \item the operators $D_i$, with $i=1,...,r$ are of constant coefficient;
    \item there exists a Riemannian metric in $M$ such that $D_i$, with $i=1,...,r$ are strongly elliptic in the sense of any of references  \emph{\citep{right_inverse_const_1,right_inverse_const_2,right_inv_1,right_inv_1_2,right_inv_2,right_inv_3,right_inv_4,ghaemi2016study}};
    \item there exists a Lorentzian metric in $M$ such that $M$ is globally hyperbolic and each $D_i$, with $i=1,...,r$, is Green hyperbolic. 
\end{enumerate}
Then theory $\mathcal{L}_1 (\jmath^{\infty}\varphi;\varepsilon)=\overline{\varphi}\varepsilon D^n \varphi$ emerges from  theory $$\mathcal{L}_2(\jmath^{\infty}\varphi;\delta)=\overline{\varphi}p^l[D_1,...,D_r]\varphi=\sum_{\vert\alpha\vert\leq l} \overline{\varphi}f_{\alpha}(\delta)D^\alpha \varphi.$$
\end{cor}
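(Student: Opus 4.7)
The plan is to verify that the data of the corollary fit precisely into the hypotheses of Theorem \ref{main_theorem} with the choices $\ell=k=k'=1$, so in particular $\ell'=k'\ell=1$, and then invoke the theorem directly. The paragraph preceding the statement has already established that the generalized background $\operatorname{GB}(M)$ under consideration is of $(1,1)$-type: $\operatorname{Par}(P)\simeq \mathbb{C}$ is a $\mathbb{C}$-algebra with (trivially existing) square roots, its action on $\operatorname{Psd}(M\times\mathbb{C})$ is ordinary scalar multiplication and is therefore injective at the identity, and the set $C_1(P;\mathbb{C})$ of nowhere vanishing functions on $\mathbb{C}$ constitutes a unital functional calculus compatible with this action by Lemma \ref{lemma_functional_calculus}. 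The work then reduces to two checks: that $\mathcal{L}_1$ is a homomorphic GPT of degree $1$, and that $\mathcal{L}_2$ is a right-invertible PPT of degree $(l,1)$ in $r$ variables whose coefficient functions lie in $C_1(P;\mathbb{C})$.

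For the first check, I would set $\Psi_\varepsilon:=\varepsilon\, D^n$ and verify that the rule $\varepsilon\mapsto \Psi_\varepsilon$ is a $\mathbb{C}$-algebra homomorphism. Additivity and $\mathbb{C}$-linearity are immediate, since scalars in $\mathbb{C}$ simply factor out of the scalar multiplication on $\operatorname{Psd}(M\times\mathbb{C})$. The only substantive step is multiplicativity, and this is precisely the place where the idempotency hypothesis $D^{2n}=D^n$ is used: for arbitrary $\varepsilon,\varepsilon'\in \mathbb{C}$ one has
\[
\Psi_\varepsilon\circ\Psi_{\varepsilon'}=(\varepsilon D^n)\circ(\varepsilon' D^n)=\varepsilon\varepsilon'\, D^{2n}=\varepsilon\varepsilon'\, D^n=\Psi_{\varepsilon\varepsilon'},
\]
so the rule is multiplicative and $\mathcal{L}_1(\jmath^\infty\varphi;\varepsilon)=\langle \varphi,\Psi_\varepsilon\varphi\rangle$ in the background's pairing.

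For the second check, the coefficient condition is immediate: by hypothesis each $f_\alpha:\mathbb{C}\to\mathbb{C}$ is nowhere vanishing, hence $f_\alpha\in C_1(P;\mathbb{C})$ by the explicit description in Lemma \ref{lemma_functional_calculus}. Right-invertibility of $D_1,\dots,D_r$ in $\operatorname{Psd}(M\times\mathbb{C})$ is then handled case by case, invoking the standard results already recalled in Section \ref{sec_GPT}: in case (i) constant coefficient operators are right-invertible in $\operatorname{Psd}$; in case (ii) strongly elliptic operators on a Riemannian background admit right-inverses by the references cited in the statement, namely \cite{right_inverse_const_1,right_inverse_const_2,right_inv_1,right_inv_1_2,right_inv_2,right_inv_3,right_inv_4,ghaemi2016study}; and in case (iii) Green hyperbolic operators on a globally hyperbolic Lorentzian manifold admit both left and right inverses by \cite{hyperbolic_1,hyperbolic_2}. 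In every case, each $D_i$ belongs to $R\operatorname{Op}(E)$ and the PPT $\mathcal{L}_2$ is right-invertible in the sense of the definition at the end of Section 2.3.

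With both verifications in place, the conclusion follows by a direct application of Theorem \ref{main_theorem}. The only step that is not a formal unpacking of the structural definitions is the multiplicativity check for $\mathcal{L}_1$, and I expect this to be the main (though mild) obstacle: it is exactly here that the idempotency assumption $D^{2n}=D^n$ is indispensable, since without it $\Psi_\varepsilon\circ\Psi_{\varepsilon'}$ would land in a component of $\operatorname{Psd}(M\times\mathbb{C})$ of different degree from $\Psi_{\varepsilon\varepsilon'}$, breaking the homomorphism property and disqualifying $\mathcal{L}_1$ from the hypotheses of the main theorem. Everything else is bookkeeping against the definitions and the cited inversion results.
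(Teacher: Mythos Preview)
Your proposal is correct and follows essentially the same route as the paper's own proof: check that $\varepsilon\mapsto \varepsilon D^n$ is homomorphic via the idempotency $D^{2n}=D^n$, observe that each of the three listed conditions guarantees right-invertibility of the $D_i$ in $\operatorname{Psd}(M\times\mathbb{C})$ by the results recalled in Section~\ref{sec_GPT}, note that the nowhere vanishing $f_\alpha$ lie in the functional calculus of Lemma~\ref{lemma_functional_calculus}, and then apply Theorem~\ref{main_theorem}. Your write-up is simply a more detailed unpacking of the paper's two-sentence argument.
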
\label{particular_case}
\begin{proof}
Since $D^{2n}=D^n\circ D^n = D^n$, the rule $\varepsilon\mapsto \varepsilon D^n$ is clearly homomorphic. On the other hand, from the discussion on Section \ref{sec_GPT} each of the three hypotheses above implies that $D_i$, for $i=1,...,r$, are right-invertible as objects of $\operatorname{Psd}(M\times\mathbb{C})$. Since the coefficient functions $f_{\alpha}$ are nowhere vanishing and therefore belong to the functional calculus of $\operatorname{GB}(M)$, the result follows from Theorem \ref{main_theorem}.
\end{proof}

\begin{rem}\label{remark_particular_case}
From Comment \ref{generalization_1}, the same construction of $\operatorname{GB}(M)$ holds if $M$ is a bounded open set of some $\mathbb{R}^N$. From Comment \ref{generalization_2} it remains valid if $\operatorname{Par}(P)\simeq \mathbb{R}_{\geq0}$ are the constant non-negative real functions and the functional calculus $C_1(\mathbb{R}_{\geq 0}; \mathbb{C})$ consists of the nowhere vanishing functions taking values in $\mathbb{R}_{\geq 0}$. The difference, in this case, is that both $\varepsilon$ and $f_\alpha (\delta)$ are real numbers, so that the Lagrangians in the last corollary are real too. See \cite{eu_emergence_v1} for further details.
\end{rem}

Other generalizations of the last corollary, and particular cases of Theorem \ref{main_theorem}, are the following:  

\begin{enumerate}
\item \emph{We can consider other kind of fields}. In Corollary \ref{particular_case} we considered a generalized background defined on the trivial line bundle $M\times \mathbb{C}$. Notice, however, that if $E$ is any complex bundle with an Hermitian metric at the fibers, then the space of pseudo-differential operators $\operatorname{Psd}(E)$ remains well-defined  as a $\mathbb{Z}$-graded $\mathbb{C}$-algebra, so that the same thing holds equally well if instead of scalar fields one considered vector fields and tensor fields.  
\item \emph{We can consider other kind of parameters}. In Corollary \ref{particular_case}
we considered $\ell=1$ and $\operatorname{Par}(P)\simeq\mathbb{C}$ (or $\mathbb{R}_{\geq0}$, due to the remark above).
We could consider, more generally, $\operatorname{Par}(P)$ as any
complex algebra with square roots endowed with an action $\cdot :\operatorname{Par}(P)\times \operatorname{Psd}(E) \rightarrow \operatorname{Psd}(E)$, where
$E$ is a complex vector bundle (due to the last remark), such that conditions (\ref{compatibility_action}) and Condition $\ref{injective_action}$  are satisfied. Let $\Psi_0$ be such that $\Psi^n_0$ is idempotent and suppose that (\ref{compatibility_2}) are satisfied for fixed $\Psi=\Psi'=\Psi^n_0$. Then the rule $\varepsilon \mapsto \varepsilon\cdot\Psi^n_0$ is an algebra homomorphism and Corollary \ref{particular_case} holds equally well. 
\begin{example}[operator parameters]

Take $P=\operatorname{End}(E)$, so that $\Gamma(P)\simeq \operatorname{End}(\Gamma (E))$, which is an associative $\mathbb{C}$-algebra with an obvious action in $\operatorname{Psd}(E)$ by composition  such that conditions (\ref{compatibility_action}) and Condition \ref{injective_action} are clearly satisfied. Let $\Psi_0\in \operatorname{Psd}(E)$ be such that $\Psi^n_0$ is idempotent and  let $Z(\Psi^n_0)$ denote its centralizer, i.e., the subalgebra of all elements $\sigma\in \operatorname{End}(\Gamma(E))$ such that $\sigma \circ \Psi^n_0 = \Psi^n_0 \circ \sigma $, so that (\ref{compatibility_2}) is satisfied. Then take $\operatorname{Par}(P)$ as some subalgebra of $Z(\Psi^n_0)$ with square roots. As a concrete example, one can take  $\operatorname{Par}(P)$ as the subalgebra of nonnegative bounded self-adjoint operators in $\Gamma (E)$ which commutes with $\Psi^n_0$.
\end{example}
\item \emph{We can consider other kinds of parameterized operators. } In Corollary \ref{particular_case} and in the above generalizations we considered only parameterized operators of the form $\Psi_{\varepsilon}=\varepsilon \cdot \Psi$, which forced us to assume $\Psi$ idempotent. Indeed, notice that the nilpotency condition was used only to ensure that $\varepsilon \mapsto \varepsilon \cdot \Psi$ is an algebra homomorphism. More generally, let $\operatorname{Par}(P)$ be some complex algebra with square roots endowed with a representation $\rho:\operatorname{Par}(P) \rightarrow \operatorname{End}_{\mathbb{C}}(\Gamma (E))$ and define the action of $\operatorname{Par}(P)$ in $\operatorname{Psd}(E)$ by $\varepsilon\cdot \Psi := \rho(\varepsilon)\circ \Psi$, so that the second condition in (\ref{compatibility_action}) is clearly satisfied. If the action is faithful, then Condition \ref{injective_action} is satisfied too. Finally, if the action is compatible with the algebra structure of $\operatorname{Psd}(E)$, i.e., if  
\begin{equation}\label{boolean_condition}
   \rho(\varepsilon)\circ (\Psi\circ \Psi')=(\rho(\varepsilon)\circ \Psi)\circ (\rho(\varepsilon)\circ \Psi') 
\end{equation}
for every $\varepsilon\in \operatorname{Par}(P)$ and $\Psi,\Psi'\in\operatorname{Psd}(E)$, then the first part of (\ref{compatibility_action}) is also satisfied and for every fixed $\Psi$ the rule $\varepsilon\mapsto \rho(\varepsilon)\circ \Psi$ is homomorphic, so that Corollary \ref{particular_case} holds analogously.
\begin{example}
Recall that a ring $R$ is \textit{Boolean} if each element is idempotent, i.e., if $x*x=x$ for every $x\in R$. Let $\operatorname{Bol}(P)$ be a Boolean ring\footnote{We suspect that the same holds, more generally, for Von Neumann regular rings, but we do not have a proof of this.} and take  $\operatorname{Par}(P)= \operatorname{Bol}(P)\otimes_{\mathbb{Z}} \mathbb{C}$. Let $\rho:\operatorname{Bol}(P)\rightarrow \operatorname{End}_{\mathbb{C}}(\Gamma (E))$ be a faithful representation of this Boolean ring and notice that (\ref{boolean_condition}) is immediately satisfied (recall that every Boolean ring is commutative). Tensoring with $\mathbb{C}$ we get a faithful representation of $\operatorname{Par}(P)= \operatorname{Bol}(P)\otimes_{\mathbb{Z}} \mathbb{C}$. Finally, notice that every Boolean ring has square roots, since for every $x$ we have $x^2=x$, i.e., $\sqrt{x}=x$.
\end{example}
\begin{example}[a more concrete case]
Let $P=M\times A$ be a trivial algebra bundle, so that $\Gamma(P)\simeq C^{\infty}(M;A)$. Let $\operatorname{Bol}(A)\subset A$ be the Boolean ring of the idempotent elements of $A$ and take $\operatorname{Bol}(P)$ as the set of functions  $f:M\rightarrow A$ such that $f(x)\in\operatorname{Bol}(A)$ for every $x\in M$. Now, let $\rho:A \rightarrow \operatorname{End}(F)$ be a faithful representation of $A$ in the typical fiber of $E$. It induces a faithful representation $\rho:\Gamma(P)\rightarrow \operatorname{End}_{\mathbb{C}}(\Gamma(E))$ and, therefore, by restriction a faithful representation of $\operatorname{Par}(P)$.
\end{example}
\end{enumerate}

\section{Proof of Theorem \ref{main_theorem}}\label{sec_proof}

\quad\;\,In this section we prove our emergence theorem. The proof will be inductive on the number $r$ of variables of $S_2$. In order to prove the base case, i.e., the emergence theorem when $S_2$ is a univariate polynomial, we will need to use some additivity and multiplicativity properties of the emergence phenomena. In turn, the induction step will be based on a technical lemma. 

In order to better understand the whole proof, this section will be organized as follows. In Subsection \ref{sec_properties} we prove the basic properties of the emergence phenomena needed to prove the base step. In Subsection \ref{sec_base_induction} this base step is proved. In Subsection \ref{sec_theorem}, Theorem \ref{main_theorem} is finally demonstrated, with the technical lemma used for the induction step being presented before in Subsection \ref{step_5}.

\subsection{Properties of Emergence Phenomena}\label{sec_properties}

\begin{itemize}
    \item In the following discussion, when the degree of a GPT does not matter it will be made implicit in order to simplify the notation. In these cases we will also write $\varepsilon$  instead of $\varepsilon(\ell)$. Thus, from now on, by saying ``\textit{let $\Psi_{\varepsilon}$ be a GPT over $\operatorname{GB}(M)$}'' we mean that it is any GPT of any degree $\ell$. 
\end{itemize}

\begin{definition}
Let $\Psi_{\varepsilon}$ and $\Psi'_{\varepsilon'}$ be GPT over the same generalized background $\operatorname{GB}(M)$. The \textit{sum} and the \textit{composition} between them are the GPT over $\operatorname{GB}(M)$ given by $\Psi^+_{(\varepsilon,\varepsilon')}=\Psi_{\varepsilon}+\Psi'_{\varepsilon'}$ and $\Psi^{\circ}_{(\varepsilon,\varepsilon')}=\Psi_{\varepsilon}\circ\Psi'_{\varepsilon'}$. Notice that the sum and the composition between GPT of degrees $\ell$ and $\ell'$ has degree $\ell+\ell'$. 
\end{definition}

\begin{lem}
\label{lemma_4}Let $\Psi_{1,\varepsilon}$, $\Psi_{2,\delta}$ and
$\Psi_{3,\kappa}$ be three GPT over the same generalized background $\operatorname{GB}(M)$ with fundamental parameter algebra $\operatorname{Par}(P)^\ell$, where $\ell$ is the degree of the first GPT, such that:
\begin{enumerate}
\item $\Psi_{1,\varepsilon}$ is multiplicative;
\item $\Psi_{1,\varepsilon}$ emerges from both $\Psi_{2,\delta}$
and $\Psi_{3,\kappa}$.
\end{enumerate}
Then $\Psi_{1,\varepsilon}$ emerges from the compositions $S_{2,\delta}\circ S_{3,\kappa}$ and $S_{3,\kappa}\circ S_{2,\delta}$.
\end{lem}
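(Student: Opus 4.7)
The plan is to exploit multiplicativity of $\Psi_{1,\varepsilon}$ to factor it as a composition of two identical pieces, each of which can then be separately converted into one of the target theories $\Psi_{2,\delta}$ or $\Psi_{3,\kappa}$ via the corresponding given emergence map.

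First I would extract from the two emergence hypotheses concrete maps $F:\operatorname{Par}(P)^\ell \to \operatorname{Par}(P)^{\ell_2}$ and $G:\operatorname{Par}(P)^\ell \to \operatorname{Par}(P)^{\ell_3}$, where $\ell_2, \ell_3$ are the respective degrees of $\Psi_2, \Psi_3$, satisfying $\Psi_{1,\varepsilon} = \Psi_{2,F(\varepsilon)}$ and $\Psi_{1,\varepsilon} = \Psi_{3,G(\varepsilon)}$ for every $\varepsilon$. Next, I would invoke the standing technical assumption that $\operatorname{Par}(P)^\ell$ admits square roots: for each $\varepsilon$ pick some $\sqrt{\varepsilon}\in\operatorname{Par}(P)^\ell$ (non-uniquely if necessary) with $\sqrt{\varepsilon}\ast\sqrt{\varepsilon}=\varepsilon$. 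Multiplicativity of $\Psi_{1,\varepsilon}$ — i.e.\ the fact that $\varepsilon \mapsto \Psi_{1,\varepsilon}$ is a monoid homomorphism from $(\operatorname{Par}(P)^\ell,\ast)$ to $(\operatorname{Op}(E),\circ)$ — then yields the basic factorisation
$$\Psi_{1,\varepsilon} \;=\; \Psi_{1,\,\sqrt{\varepsilon}\ast\sqrt{\varepsilon}} \;=\; \Psi_{1,\sqrt{\varepsilon}}\circ \Psi_{1,\sqrt{\varepsilon}}.$$

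Applying the two emergence relations to the two factors separately — using $F$ on the left factor and $G$ on the right — produces
$$\Psi_{1,\varepsilon} \;=\; \Psi_{2,\,F(\sqrt{\varepsilon})}\circ \Psi_{3,\,G(\sqrt{\varepsilon})} \;=\; \Psi^{\circ}_{(F(\sqrt{\varepsilon}),\,G(\sqrt{\varepsilon}))},$$
which exhibits $\Psi_{1,\varepsilon}$ as emerging from the composite GPT $\Psi_{2,\delta}\circ\Psi_{3,\kappa}$ via the emergence map $H(\varepsilon):=(F(\sqrt{\varepsilon}),\,G(\sqrt{\varepsilon}))$, valued in $\operatorname{Par}(P)^{\ell_2+\ell_3}$. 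The reverse composition $\Psi_{3,\kappa}\circ\Psi_{2,\delta}$ is handled identically, simply by swapping which emergence map is applied to which factor, producing $H'(\varepsilon):=(G(\sqrt{\varepsilon}),\,F(\sqrt{\varepsilon}))$.

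The only delicate point is not the algebraic manipulation itself but the interface between the operator-level identity furnished by multiplicativity and the action-functional-level notion of emergence: one needs both to be read at the same level (treating emergence as operator equality, which immediately descends to the Lagrangian under the fixed pairing) so that substituting $\Psi_{2,F(\sqrt\varepsilon)}$ and $\Psi_{3,G(\sqrt\varepsilon)}$ into the composition is legitimate. The non-uniqueness of square roots is harmless, since the conclusion asks only for the existence of some emergence map $H$, not for a canonical one, so any consistent global choice of $\sqrt{\cdot}$ suffices.
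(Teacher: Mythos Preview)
Your proposal is correct and follows essentially the same route as the paper: extract the emergence maps $F,G$, use multiplicativity together with the existence of square roots in $\operatorname{Par}(P)^\ell$ to write $\Psi_{1,\varepsilon}=\Psi_{1,\sqrt{\varepsilon}}\circ\Psi_{1,\sqrt{\varepsilon}}=\Psi_{2,F(\sqrt{\varepsilon})}\circ\Psi_{3,G(\sqrt{\varepsilon})}$, and define $H(\varepsilon)=(F(\sqrt{\varepsilon}),G(\sqrt{\varepsilon}))$. The only cosmetic difference is that the paper first computes $\Psi_{1,\varepsilon^2}$ and then substitutes $\varepsilon=\sqrt{\varepsilon'}$, whereas you factor $\Psi_{1,\varepsilon}$ directly; your added remarks on the operator/action-functional interface and on non-uniqueness of square roots are reasonable and do not diverge from the paper's argument.
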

\begin{proof}
From the
second hypothesis we conclude that $\Psi_{1,\varepsilon}=\Psi_{2,F(\varepsilon)}$ and 
$\Psi_{1,\varepsilon}=\Psi_{3,G(\varepsilon)}$
for certain functions $F,G$. Composing them and using the first hypothesis, we find
\[
\Psi_{1,\varepsilon^2}= \Psi_{1,\varepsilon}\circ\Psi_{1,\varepsilon}=\Psi_{2,F(\varepsilon)}\circ\Psi_{3,G(\varepsilon)}=\Psi_{3,G(\varepsilon)}\circ\Psi_{2,F(\varepsilon)}.
\]
Let $\sqrt{-}:\operatorname{Par}(P)^\ell\rightarrow \operatorname{Par}(P)^\ell$ be a function selecting to each fundamental parameter $\varepsilon'$ a square root $\sqrt{\varepsilon'}$, which exists by hypothesis. Then, for every $\varepsilon'$ one gets
\[
\Psi_{1,\varepsilon'}=\Psi_{2,F(\sqrt{\varepsilon'})}\circ\Psi_{3,G(\sqrt{\varepsilon'})}=\Psi_{3,G(\sqrt{\varepsilon'})}\circ\Psi_{2,F(\sqrt{\varepsilon'})} = \Psi^{\circ}_{H(\varepsilon')},
\]
finishing the proof.
\end{proof}
In a completely analogous way one proves the following.
\begin{lem}
\label{lemma_3} Let $\Psi_{1,\varepsilon}$, $\Psi_{2,\delta}$ and
$\Psi_{3,\kappa}$ be three GPT over the same generalized background $\operatorname{GB}(M)$, such that:
\begin{enumerate}
\item $\Psi_{1,\varepsilon}$ is scalar invariant;
\item $\Psi_{1,\varepsilon}$ emerges from both $\Psi_{2,\delta}$
and $\Psi_{3,\kappa}$.
\end{enumerate}
Then $\Psi_{1,\varepsilon}$ also emerges from the sum $\Psi_{2,\delta}+\Psi_{3,\kappa}$.
\end{lem}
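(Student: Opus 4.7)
The plan is to mimic the proof of Lemma \ref{lemma_4}, replacing composition by sum and multiplicativity by scalar invariance. First, I will extract from hypothesis~2 functions $F$ and $G$ on the parameter spaces satisfying $\Psi_{1,\varepsilon}=\Psi_{2,F(\varepsilon)}$ and $\Psi_{1,\varepsilon}=\Psi_{3,G(\varepsilon)}$ for every $\varepsilon$, and add these two identities. By definition of the sum of two GPT, this gives
\[
2\Psi_{1,\varepsilon}=\Psi_{2,F(\varepsilon)}+\Psi_{3,G(\varepsilon)}=\Psi^{+}_{(F(\varepsilon),G(\varepsilon))}.
\]

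Next, I will invoke scalar invariance of $\Psi_{1,\varepsilon}$ with the scalar $c=2\in\mathbb{K}$ to rewrite $2\Psi_{1,\varepsilon}=\Psi_{1,2\varepsilon}$, so that $\Psi_{1,2\varepsilon}=\Psi^{+}_{(F(\varepsilon),G(\varepsilon))}$. To reach an arbitrary parameter, given $\varepsilon'\in\operatorname{Par}(P)^{\ell}$ I will set $\varepsilon=\varepsilon'/2$ (this division is legitimate because $\operatorname{Par}(P)^{\ell}$ is by hypothesis a $\mathbb{K}$-algebra and $1/2\in\mathbb{K}$), and define $H(\varepsilon')=(F(\varepsilon'/2),G(\varepsilon'/2))$. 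The identity
\[
\Psi_{1,\varepsilon'}=\Psi^{+}_{H(\varepsilon')}
\]
will then exhibit $\Psi_{1,\varepsilon}$ as emerging from the sum $\Psi_{2,\delta}+\Psi_{3,\kappa}$.

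There is no real obstacle in this argument; it is the additive mirror of Lemma \ref{lemma_4}. The one subtlety worth flagging is that, in contrast with the multiplicative case (where one needed to select square roots $\sqrt{\varepsilon'}\in\operatorname{Par}(P)^{\ell}$), the present proof only needs to invert the integer $2$ inside the ground field $\mathbb{K}$, which is automatic from the $\mathbb{K}$-algebra axioms. Consequently the hypothesis on the existence of square roots in $\operatorname{Par}(P)^{\ell}$ plays no role here, which is consistent with the intuition that scalar invariance is a strictly weaker form of linearity than full multiplicativity.
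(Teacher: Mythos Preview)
Your proposal is correct and is precisely the ``completely analogous'' argument that the paper alludes to without writing out: replace composition by sum, multiplicativity by scalar invariance, $\varepsilon^{2}$ by $2\varepsilon$, and the choice of a square root $\sqrt{\varepsilon'}$ by the scalar multiple $\varepsilon'/2$. Your closing observation that this additive version needs only invertibility of $2$ in $\mathbb{K}$ rather than the square-root hypothesis on $\operatorname{Par}(P)^{\ell}$ is accurate and a worthwhile remark.
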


\subsection{Base of Induction}\label{sec_base_induction}
$\quad\;\,$In this subsection, using the additivity and multiplicativity properties of last section, we will prove the following lemma, which will be the base of the induction step in the proof of Theorem \ref{main_theorem}:

\begin{lem}
\label{lemma_4_1} Let $\operatorname{GB}(M)$ be generalized background of $(\ell,k)$-type. Let $\Psi_{1,\varepsilon(\ell)}$ be a GPT of degree $\ell$ and let $\Psi_{2,\delta(\ell')}$ a PPT of degree $(l,\ell')$ in $r=1$ variables, defined on $\operatorname{GB}(M)$ and such that $\ell'=k'\ell$, with $0<k'\leq k$. Suppose that:
\begin{enumerate}
    \item $\Psi_{1,\varepsilon(\ell)}$ is homomorphic;
    \item $\Psi_{2,\delta(\ell')}$ is right-invertible and the coefficient functions $f_\alpha:\operatorname{Par}(P)^{k'\ell}\rightarrow \mathbb{R}$ of the polynomial $p^l_{\ell'}$  defining $\Psi_{2,\delta(\ell')}$  belongs to the functional calculus $C_{k'\ell;k}(P;\mathbb{K})$.
\end{enumerate}
Then $\Psi_{1,\varepsilon(\ell)}$ emerges from $\Psi_{2,\delta(\ell')}$.
\end{lem}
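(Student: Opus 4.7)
The plan is to decompose the univariate polynomial $\Psi_{2,\delta(\ell')} = \sum_{\alpha=0}^{l} f_\alpha(\delta)\,\Omega^\alpha$, where $\Omega \in R\operatorname{Op}(E)$ is the sole defining operator (since $r=1$), into its $l+1$ monomial pieces and then recombine via the emergence closure properties of Subsection \ref{sec_properties}. Note that each $\Omega^\alpha$ inherits right-invertibility from $\Omega$ (right-invertible elements are closed under composition), so each monomial $\delta \mapsto f_\alpha(\delta)\,\Omega^\alpha$ is a legitimate right-invertible GPT to which we can attempt to match $\Psi_{1,\varepsilon(\ell)}$.

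The central step is to show that $\Psi_{1,\varepsilon(\ell)}$ emerges from each monomial GPT $\delta \mapsto f_\alpha(\delta)\,\Omega^\alpha$. The functional-calculus hypothesis $f_\alpha \in C_{k'\ell;k}(P;\mathbb{K})$ supplies an operator $\Psi_{f_\alpha}^{k\ell}$ satisfying (\ref{functional_calculus}), which ``normalizes'' the scalar factor into a parameter action: $\Psi_{f_\alpha}^{k\ell}\circ[f_\alpha(\delta)\,\Omega^\alpha] = \delta \cdot^{k\ell}\Omega^\alpha$. Combined with the isomorphism $\operatorname{Par}(P)^{k\ell} \simeq \operatorname{Par}(P)^{k\ell}\cdot^{k\ell}\Omega^\alpha$ guaranteed by the injectivity of right multiplication by $I$ and the compatibility (\ref{compatibility_action}), this lets one invert the action and solve for $\delta$ from $\Psi_{1,\varepsilon(\ell)}$, yielding an explicit map $F_\alpha(\varepsilon)$ for each $\alpha$. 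Since $\Psi_1$ is homomorphic and therefore scalar invariant, Lemma \ref{lemma_3} applied iteratively combines these $l+1$ monomial emergences into an emergence of $\Psi_1$ from the formal sum $\sum_\alpha f_\alpha(\delta_\alpha)\,\Omega^\alpha$ with independent parameters $(\delta_0,\ldots,\delta_l) \in \operatorname{Par}(P)^{(l+1)\ell'}$.

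The main obstacle will be the \emph{diagonalization} of this multi-parameter emergence into a single-parameter emergence from $\Psi_{2,\delta(\ell')}$, which requires all $l+1$ monomials to share the same $\delta$. I expect to close this gap by using the square-root hypothesis and the componentwise algebra structure on $\operatorname{Par}(P)^{\ell'} = \operatorname{Par}(P)^{k'\ell}$ to construct a single $\delta = F(\varepsilon) \in \operatorname{Par}(P)^{\ell'}$ that simultaneously satisfies the $l+1$ coefficient constraints derived from the individual $F_\alpha$; the compatibility (\ref{compatibility_action}) between the action $\cdot^{k\ell}$ and the composition $\circ$, together with the homomorphism property of $\Psi_1$, should ensure that such a common $\delta$ exists. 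Completing this pointwise bookkeeping finishes the base case of the induction on $r$ that drives the proof of Theorem \ref{main_theorem}.
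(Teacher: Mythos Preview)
Your decomposition differs from the paper's. You split the univariate polynomial directly into its monomials $f_\alpha(\delta)\,\Omega^\alpha$, show emergence from each one via the functional-calculus machinery (which is exactly Lemma~\ref{lemma_2}), and then aggregate using only the additive closure Lemma~\ref{lemma_3}. The paper instead runs a Horner-style recursion: writing $\Psi_{2,\delta}=\Gamma_1(\delta)\circ\Psi$ with $\Gamma_j=f_j\cdot I+\Gamma_{j+1}\circ\Psi$, it reduces the claim inductively from $\Gamma_1$ down to the base $\Gamma_l=f_l\Psi^{l-1}$, invoking Lemma~\ref{lemma_2} for the base and for each constant factor $1\cdot\Psi$, Lemma~\ref{lemma_4} to peel off each composition with $\Psi$, and Lemma~\ref{lemma_3} at each additive split. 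Your route is more direct and never needs the multiplicative closure Lemma~\ref{lemma_4}, since Lemma~\ref{lemma_2} already handles $g(\delta)\Psi^l$ for arbitrary $l$ in one shot.

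The diagonalization obstacle you flag is genuine, and your proposed resolution is too vague to stand as written: Lemma~\ref{lemma_3} only yields emergence from the GPT $(\delta_0,\dots,\delta_l)\mapsto\sum_\alpha f_\alpha(\delta_\alpha)\Omega^\alpha$ of degree $(l+1)\ell'$, and nothing in the square-root or compatibility axioms forces the individual maps $F_\alpha$ to agree on a common $\delta\in\operatorname{Par}(P)^{\ell'}$. Note, however, that the paper's sketch carries the identical gap at its additive steps---applying Lemma~\ref{lemma_3} to $f_j\cdot I+\Gamma_{j+1}\circ\Psi$ gives emergence from a two-parameter sum rather than from the single-parameter $\Gamma_j$---so the difficulty is intrinsic to the closure lemmas as stated rather than an artifact of your particular decomposition.
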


We begin with another lemma.

\begin{lem}
\label{lemma_2} Let $\Psi_{1,\varepsilon(\ell)}$ be a GPT of degree $\ell$ defined on a generalized background $\operatorname{GB}(M)$ of $(\ell,k)$-type, with $k>0$. Then $\Psi_{1,\varepsilon(\ell)}$ emerges from every GPT $\Psi_{2,\delta(\ell')}$ over $\operatorname{GB}(M)$, which has degree $\ell'= k'\ell$ for some $0<k'\leq k$, and such that   
$\Psi_{2,\delta(\ell')}^{l}=g(\delta(\ell'))\Psi^{l}$, with $l\geq0$,
where $\Psi$ is right-invertible and $g\in C_{k'\ell;k}(P;\mathbb{K})$.
\end{lem}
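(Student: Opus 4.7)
My plan is to construct the emergence map $F:\operatorname{Par}(P)^{\ell}\to\operatorname{Par}(P)^{\ell'}$ explicitly, combining the functional-calculus operator $\Psi^{k'\ell;k}_{g}$ (obtained from the hypothesis $g\in C_{k'\ell;k}(P;\mathbb{K})$) with the right-multiplication map $r^{\ell'}_{\Psi^l}:\operatorname{Par}(P)^{\ell'}\to\operatorname{Op}(E)$, $\eta\mapsto\eta\cdot^{k'\ell}\Psi^l$. By Comment 2 after the technical assumptions in Section \ref{sec_emergence_thm}, applied to the right-invertible operator $\Psi^l\in R\operatorname{Op}(E)$ (right-invertible because $\Psi$ is), this latter map is injective, and hence a bijection onto its image.

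First I would apply the functional-calculus compatibility (\ref{functional_calculus}) at the generalized operator $\Psi^{l}$, which yields
\[
\Psi^{k'\ell;k}_{g}\circ\bigl(g(\delta(\ell'))\Psi^{l}\bigr)=\delta(\ell')\cdot^{k'\ell}\Psi^{l}.
\]
Substituting the hypothesis $\Psi_{2,\delta(\ell')}^{l}=g(\delta(\ell'))\Psi^{l}$, this becomes $\Psi^{k'\ell;k}_{g}\circ\Psi_{2,\delta(\ell')}^{l}=\delta(\ell')\cdot^{k'\ell}\Psi^{l}$. By injectivity of $r^{\ell'}_{\Psi^l}$, this recovers $\delta(\ell')$ uniquely from $\Psi_{2,\delta(\ell')}^{l}$ as $(r^{\ell'}_{\Psi^l})^{-1}\circ\Psi^{k'\ell;k}_{g}$ applied to the latter. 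Running exactly the same recipe with $\Psi_{2,\delta(\ell')}^{l}$ replaced by $\Psi_{1,\varepsilon(\ell)}^{l}$ defines a candidate parameter, after an $l$-th-root selection inside $\operatorname{Par}(P)^{\ell'}$ obtained by iterating the square-root hypothesis on $\operatorname{Par}(P)^{\ell}$ (extended componentwise to $\operatorname{Par}(P)^{k'\ell}$, as described just before Subsection \ref{sec_formal_statement}). The verification $\Psi_{1,\varepsilon(\ell)}=\Psi_{2,F(\varepsilon(\ell))}$ is then a backwards reading of the functional-calculus identity together with the compatibility conditions (\ref{compatibility_action}).

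The case $l=1$ is essentially immediate and already captures the heart of the argument; for $l=2^{j}$ the square-root hypothesis is applied $j$ times directly. The main obstacle I anticipate is the $l$-th-root extraction for general $l\geq 2$: one must combine iterated square roots with the bilinearity of $\cdot^{k'\ell}$ and the compatibility conditions (\ref{compatibility_action}) to absorb the remaining factors while ensuring the selected root actually lies inside $\operatorname{Par}(P)^{\ell'}$ and not in some larger extension. The degenerate case $l=0$ should be treated separately, since the hypothesis then reduces to $I=g(\delta(\ell'))I$, forcing $g\equiv 1$, after which the emergence relation is supplied trivially by the unital part of the functional calculus (Lemma \ref{lemma_functional_calculus}).
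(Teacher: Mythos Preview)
Your core idea---combine the functional-calculus identity (\ref{functional_calculus}) with the injectivity of right-multiplication by a right-invertible operator (Comment~2) to read off the emergence map---is exactly the mechanism the paper uses. However, your handling of the exponent $l$ rests on a misreading of the notation and introduces a step that is both unnecessary and not supported by the hypotheses.

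The symbol $\Psi_{2,\delta(\ell')}^{l}$ is not the $l$-th power of some operator $\Psi_{2,\delta(\ell')}$; it is simply the \emph{name} of the GPT under consideration, \emph{defined} by the formula $\Psi_{2,\delta(\ell')}^{l}:=g(\delta(\ell'))\Psi^{l}$. Consequently the emergence condition to be verified is $\Psi_{1,\varepsilon(\ell)}=g(F(\varepsilon(\ell)))\Psi^{l}$, with no $l$-th power of $\Psi_{1}$ appearing anywhere. Your passage ``replace $\Psi_{2,\delta(\ell')}^{l}$ by $\Psi_{1,\varepsilon(\ell)}^{l}$ and then extract an $l$-th root'' is therefore off target; moreover, the algebraic hypotheses give you only \emph{square} roots in $\operatorname{Par}(P)^{\ell'}$, so an $l$-th root for $l$ not a power of $2$ is not available in general. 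Your treatment of $l=0$ is also mistaken: the hypothesis becomes $\Psi_{2,\delta(\ell')}^{0}=g(\delta(\ell'))I$, which does not force $g\equiv 1$; one must still produce $F$ with $\Psi_{1,\varepsilon(\ell)}=g(F(\varepsilon(\ell)))I$.

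The paper sidesteps all of this by a one-line reduction: since $\Psi$ is right-invertible, $\Psi^{l}$ is right-invertible for every $l\geq 0$ (the case $l=0$ giving $\Psi^{0}=I$), so by relabelling $\Xi:=\Psi^{l}$ one is immediately in the $l=1$ situation. There one composes with a right-inverse $R_{\Psi}$ and applies exactly your functional-calculus/injectivity argument to \emph{define} $F$ via $\Psi^{\ell'}_{g}\circ\Psi_{1,\varepsilon(\ell)}\circ R_{\Psi}=F(\varepsilon(\ell))\cdot^{\ell'}I$ together with $\operatorname{Par}(P)^{\ell'}\cdot^{\ell'}I\simeq\operatorname{Par}(P)^{\ell'}$. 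Once you correct the reading of $\Psi_{2,\delta(\ell')}^{l}$, your argument collapses to this one and no root extraction is needed.
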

\begin{proof}
Since $\Psi^0=I$ is right-invertible,
the case $l=0$ is a particular setup of case $l=1$. Furthermore,
if $l>1$ and $\Psi$ is right-invertible, then $\Xi=\Psi^{l}$ is
right-invertible too, so that the case $l>1$ also follows from the
$l=1$ case. Thus, it is enough to work with $l=1$. Thus, let $R_{\Psi}\in\operatorname{Op}(E)$ be a right-inverse for $\Psi$ and notice that to find an emergence from $\Psi_{1,\varepsilon(\ell)}$ to $\Psi_{2,\delta(\ell')}$ is equivalent to building a function $F:\operatorname{Par}(P)^\ell \rightarrow \operatorname{Par}(P)^{\ell'}$ such that $\Psi_{1,\varepsilon(\ell)}\circ R_{\Psi}=g(F(\varepsilon(\ell)))I$. From (\ref{functional_calculus}) and from the fact that the right multiplication by right-invertible operators is injective, last condition is in turn equivalent to the existence of $F$ such that $\Psi^{\ell'}_g \circ \Psi_{1,\varepsilon(\ell)}\circ R_{\Psi}=F(\varepsilon(\ell))\cdot^{\ell'}I$, but this actually defines $F$ via $\operatorname{Par}(P)^{\ell'}\cdot^{\ell'}I\simeq \operatorname{Par}(P)^{\ell'}$.
\end{proof}

\begin{proof}[Sketch of proof of Lemma \ref{lemma_4_1}]
Given a PPT
$ 
\Psi_{2,\delta(\ell')}=\sum_if_i(\delta(\ell'))\Psi^i
$
in the hypothesis, for each $j=1,...,l$ let $\Gamma_{j}=\sum_{i=j}^{l}f_{i}(\delta(\ell'))\Psi^{i-1}$
and notice that 
\[
\Psi_{2,\delta(\ell')}=(\sum_{i=1}^{l}f_{i}(\delta(\ell'))\Psi^{i-1})\circ\Psi=\Gamma_{1}(\delta(\ell'))\circ\Psi.
\]
Since $\Psi$ is right-invertible and $\operatorname{GB}(M)$ is of $(\ell,k)$-type, with $k>0$, from Lemma \ref{lemma_2} it follows
that $\Psi_{1,\varepsilon(\ell)}$ emerges from $1\cdot\Psi$\footnote{Here we are using explicitly that the functional calculus is unital.}.
Thus, if $S_{\varepsilon(\ell)}$ itself emerges from $\Gamma_{1}$
one can use Lemma \ref{lemma_4} to conclude that it actually emerges
from $\Gamma_{1}\circ\Psi$. In turn, notice that $\Gamma_{1}=f_{1}\cdot I+\Gamma_{2}\circ\Psi=\Gamma_{2}\circ\Psi+f_{1}\cdot I$.
But, since $I$ is right-invertible and since $f_{1}\in C_{\ell';k}(P;\mathbb{K})$,
from Lemma \ref{lemma_2}
we get that $S_{\varepsilon(\ell)}$ emerges from the theory defined
by $f_{1}\cdot I$, while by the same argument we see that $\Gamma_{2}\circ\Psi$
emerges from $f_{1}\cdot I$. Therefore, if $\Psi_{1,\varepsilon(\ell)}$
emerges from $\Gamma_{2}\circ\Psi$ we will be able to use Lemma \ref{lemma_3}
to conclude that it emerges from $\Gamma_{1}$, finishing the proof.
It happens that, as done for $\Gamma_{1}\circ\Psi$, we see that $\Gamma_{2}$
emerges from $\Psi$ and we already know that $\Psi_{1,\varepsilon(\ell)}$
emerges from $\Psi$. Thus, our problem is to prove that $\Psi_{1,\varepsilon(\ell)}$
emerges from $\Gamma_{2}$ instead of from $\Gamma_{1}$. A finite
induction argument proves that if $\Psi_{1,\varepsilon(\ell)}$ emerges
from $\Gamma_{l}$, then it emerges from $\Gamma_{j}$, for each $j=1,...,l$.
Recall that $\Gamma_{l}=f_{l}\cdot\Psi^{l-1}$. Since $f_{l}\in C_{\ell';k}(P;\mathbb{K})$
we can use Lemma \ref{lemma_2} to see that $S_{\varepsilon(\ell)}$
really emerges from $\Gamma_{l}$.
\end{proof}

\subsection{Technical Lemma for Induction Step \label{step_5}}

$\quad\;\,$ Also as a consequence of the properties of the emergence phenomena, we can now prove the following technical lemma, which will be used in the induction step of Theorem \ref{main_theorem}.

\begin{lem}
\label{lemma_5} Let $\Psi_{1,\varepsilon}$ be a GPT over a generalized background $\operatorname{GB}(M)$. Given $l\geq1$, let $\Psi_{2_j,\delta_{j}}$ and $\Psi_{3_s,\kappa_{s}}$,
with $1\leq j,s\leq l$ be two families of GPT, also defined over  $\operatorname{GB}(M)$. Assume that:
\begin{enumerate}
\item $\Psi_{1,\varepsilon}$ is homomorphic;
\item $\Psi_{1,\varepsilon}$ emerges from $\Psi_{2_{j},\delta_{j}}$
and from $\Psi_{3_{s},\kappa_{s}}$ for every $j,s$.
\end{enumerate}
Then $\Psi_{1,\varepsilon}$ emerges from $\Psi_{\delta_{J},\kappa_{J}}^{s}=\sum_{j=1}^{s}\Psi_{2_{j},\delta_{j}}\circ \Psi_{3_{j},\kappa_{j}}$, for every $s=1,...,l$
\end{lem}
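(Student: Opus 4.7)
The plan is to prove Lemma \ref{lemma_5} by induction on $s$, repeatedly applying the composition lemma (Lemma \ref{lemma_4}) and the sum lemma (Lemma \ref{lemma_3}), both of which are available because $\Psi_{1,\varepsilon}$ is assumed homomorphic (hence simultaneously multiplicative and scalar invariant, and in particular additive).

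For the base case $s=1$ we have $\Psi^{1}_{\delta_J,\kappa_J}=\Psi_{2_1,\delta_1}\circ\Psi_{3_1,\kappa_1}$. Since $\Psi_{1,\varepsilon}$ is homomorphic it is in particular multiplicative, and by assumption it emerges from both $\Psi_{2_1,\delta_1}$ and $\Psi_{3_1,\kappa_1}$. Lemma \ref{lemma_4} then gives emergence from the composition, which is exactly $\Psi^{1}_{\delta_J,\kappa_J}$.

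For the inductive step, suppose the statement holds up to $s-1$, i.e.\ $\Psi_{1,\varepsilon}$ emerges from $\Psi^{s-1}_{\delta_J,\kappa_J}=\sum_{j=1}^{s-1}\Psi_{2_j,\delta_j}\circ\Psi_{3_j,\kappa_j}$. By the same argument as in the base case, applied to the pair $(\Psi_{2_s,\delta_s},\Psi_{3_s,\kappa_s})$, Lemma \ref{lemma_4} yields that $\Psi_{1,\varepsilon}$ emerges from $\Psi_{2_s,\delta_s}\circ\Psi_{3_s,\kappa_s}$. Writing
\[
\Psi^{s}_{\delta_J,\kappa_J}\;=\;\Psi^{s-1}_{\delta_J,\kappa_J}\;+\;\Psi_{2_s,\delta_s}\circ\Psi_{3_s,\kappa_s},
\]
we now have $\Psi_{1,\varepsilon}$ emerging from both summands on the right. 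Since homomorphy entails scalar invariance, Lemma \ref{lemma_3} applies and gives emergence from the sum $\Psi^{s}_{\delta_J,\kappa_J}$, closing the induction.

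I do not expect genuine obstacles here: the whole argument is a bookkeeping exercise that plugs the two structural results of Section \ref{sec_properties} into an induction on $s$. The only point worth double-checking is that the parameter algebra produced at each step still lives inside a space where Lemmas \ref{lemma_4} and \ref{lemma_3} can be re-applied; but since sums and compositions of GPTs are again GPTs over the same generalized background $\operatorname{GB}(M)$ (with degrees added), and the hypotheses on $\Psi_{1,\varepsilon}$ concern only $\Psi_{1,\varepsilon}$ itself and not the other arguments, the induction step goes through without any additional assumption.
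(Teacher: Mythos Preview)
Your proof is correct and follows essentially the same approach as the paper: induction on $s$, with the base case handled by Lemma~\ref{lemma_4} (composition) and the inductive step by decomposing $\Psi^{s}_{\delta_J,\kappa_J}=\Psi^{s-1}_{\delta_J,\kappa_J}+\Psi_{2_s,\delta_s}\circ\Psi_{3_s,\kappa_s}$ and applying Lemma~\ref{lemma_3} (sum). Your explicit remark that homomorphy implies both multiplicativity and scalar invariance is a helpful clarification that the paper leaves implicit.
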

\begin{proof}
We proceed by induction in $l$. First of all, notice that from the
first two hypotheses and from Lemma \ref{lemma_4} we see that $\Psi_{1,\varepsilon}$
emerges from the composition $\Psi_{2_{j},\delta_{j}}\circ \Psi_{3_{j},\kappa_{j}}$
for every $j=1,...,l$. In particular, it emerges from $\Psi_{\delta_{1},\kappa_{1}}^{1}=\Psi_{2_{1},\delta_{1}}\circ \Psi_{3_{1},\kappa_{1}}$,
which is the base of induction. For every $m=1,...,l-1$ it also emerges
from $\Psi_{2_{m+1},\delta_{m+1}}\circ \Psi_{3_{m+1},\kappa_{m+1}}$. For
the induction step, suppose that $\Psi_{1,\varepsilon}$ emerges
from $\Psi_{\delta_{J},\kappa_{J}}^{m}=\sum_{j=1}^{m}\Psi_{2_{j},\delta_{j}}\circ \Psi_{3_{j},\kappa_{j}}$
for every $1\leq m\leq l-1$ and let us show that it emerges from
$\Psi_{\delta_{J},\kappa_{J}}^{m+1}$. Notice that 
\begin{eqnarray*}
\Psi_{\delta_{J},\kappa_{J}}^{m+1} & = & \sum_{j=1}^{m+1}\Psi_{2_{j},\delta_{j}}\circ \Psi_{3_{j},\kappa_{j}}=\sum_{j=1}^{m}(\Psi_{2_{j},\delta_{j}}\circ \Psi_{3_{j},\kappa_{j}})+\Psi_{2_{m+1},\delta_{m+1}}\circ \Psi_{3_{m+1},\kappa_{m+1}}\\
 & = & \Psi_{\delta_{J},\kappa_{J}}^{m}+(\Psi_{2_{m+1},\delta_{m+1}}\circ \Psi_{3_{m+1},\kappa_{m+1}}).
\end{eqnarray*}
From the induction hypothesis $\Psi_{1,\varepsilon}$ emerges from
$\Psi_{\delta_{J},\kappa_{J}}^{m}$, while by the above it also emerges
from $\Psi_{2_{m+1},\delta_{m+1}}\circ \Psi_{3_{m+1},\kappa_{m+1}}$. The result then follows from Lemma \ref{lemma_3}.
\end{proof}

\subsection{Proof of Theorem \ref{main_theorem} }\label{sec_theorem}

$\quad\;\,$We can finally proof our emergence theorem. For the convenience of the reader, we state it again.$\underset{\;}{\underset{\;}{\;}}$

\noindent \textbf{Theorem 3.1} (Emergence Theorem)
\textit{Let $M$ be a compact and oriented manifold and let $\operatorname{GB}(M)$ be a generalized background of $(\ell,k)$-type, with $k>0$. Let $\Psi_{1,\varepsilon(\ell)}$ be a GPT of degree $\ell$ and let $\Psi_{2,\delta(\ell)}$ be a PPT of degree $(l,\ell')$ in $r$ variables, where $\ell '=k'\ell$, with $0<k'\leq k$, defined on $\operatorname{GB}(M)$. Suppose that:
\begin{enumerate}
    \item $\Psi_{1,\varepsilon(\ell)}$ is homomorphic;
    \item $\Psi_{2,\delta(\ell)}$ is right-invertible and the coefficient functions $f_{\alpha}:\operatorname{Par}(P)^{k'\ell}\rightarrow \mathbb{K}$ belongs to the functional calculus $C_{k'\ell;k}(P;\mathbb{K})$. 
\end{enumerate}
Then $\Psi_{1,\varepsilon(\ell)}$ emerges from $\Psi_{2,\delta(\ell)}$.}

\begin{proof} The proof will be done by induction in $r$. The base of induction is Lemma \ref{lemma_4_1}. Suppose that the theorem holds
for each $r=1,...,q$ and let us show that it holds for $r=q+1$.
Let $p_{\ell';q+1}^{l}[x_{1},...,x_{r+1}]=\sum_{\vert\alpha\vert\leq l}f_{\alpha}\cdot x^{\alpha}$
be a multivariate polynomial with coefficients in $\operatorname{Map}(\operatorname{Par}(P)^{\ell'};\mathbb{K})$, which actually belong
to $C_{k'\ell;k}(P;\mathbb{K})$. Since for every commutative ring $R$ we have $R[x_{1},...,x_{q+1}]\simeq R[x_{1},...,x_{q}][x_{q+1}]$, given right-invertible generalized operators $\Psi_1,...,\Psi_{q+1}\in \operatorname{Op}(E)$ one can write
$$
\Psi_{2,\delta(\ell)}=p_{\ell';q+1}^{l}[\Psi_{1},...,\Psi_{r+1}]=\sum_{j}p^{l_j}_{\ell';q,j}[\Psi_{1},...,\Psi_{q}]\cdot \Psi_{q+1}^{j},
$$
where each $p^{l_j}_{\ell';q,j}[x_1,...,x_q]\in\operatorname{Map}_{l_j}(\operatorname{Par}(P)^{\ell'};\mathbb{K})[x_1,...,x_q]$ has coefficients which belongs to  belongs to $C_{k'\ell;k}(P;\mathbb{K})$. Thus, by the induction hypothesis, $\Psi_{1,\varepsilon(\ell)}$ emerges from $p^{l_j}_{\ell';q,j}[x_1,...,x_q]$. Since $\Psi_{q+1}$ is right-invertible and since the functional calculus is unital, from Lemma  \ref{lemma_2} we see that $\Psi_{1,\varepsilon(\ell)}$ emerges from $\Psi^j_{q+1}$. The result then follows from Lemma \ref{lemma_5}.
\end{proof}

\section*{Acknowledgments}

The first author was supported by CAPES (grant number 88887.187703/2018-00).

\bibliographystyle{plainnat}
\bibliography{emergence}

\end{document}